\theoremstyle{plain}
\newtheorem{lemma}{Lemma}
\newtheorem{theorem}{Theorem}
\newtheorem{assumption}{Assumption}
\newtheorem{definition}{Definition}
\newtheorem{property}{Property}
\newtheorem{remark}{Remark}
\newtheorem{prop}{Proposition}
\DeclarePairedDelimiter\norm{\lVert}{\rVert}
\DeclarePairedDelimiterX\brk[2]{\langle}{\rangle}{#1\,,\,#2} 
\DeclarePairedDelimiter\set{\{}{\}} 
\DeclarePairedDelimiterX\Set[2]{\{}{\}}{#1 \;;\; #2} 
\newcommand{\s}[1]{^{(#1)}}
\newcommand{\h}{\widehat}
\newcommand{\argmax}{\operatornamewithlimits{arg\,max}}
\newcommand{\argmin}{\operatornamewithlimits{arg\,min}}
\newcommand\independent{\protect\mathpalette{\protect\independenT}{\perp}}
\def\independenT#1#2{\mathrel{\rlap{$#1#2$}\mkern2mu{#1#2}}}
\newcommand\indep\independent
\DeclareMathOperator{\vspan}{span}
\newcommand{\NN}{\mathbb{N}}
\newcommand{\PP}{\mathbb{P}}
\newcommand{\RR}{\mathbb{R}}
\def \mR     {\mathbb{R}}
\def \mN     {\mathbb{N}}
\def \mP     {\mathbb{P}}
\def \mE     {\mathbb{E}}
\newcommand{\Acal}{\mathcal{A}}
\newcommand{\Hcal}{\mathcal{H}}
\newcommand{\Mcal}{\mathcal{M}}
\newcommand{\Ocal}{\mathcal{O}}
\newcommand{\Xcal}{\mathcal{X}}
\newcommand{\Ycal}{\mathcal{Y}}
\newcommand{\calB}{\mathcal{B}}
\newcommand{\calC}{\mathcal{C}}
\newcommand{\calF}{\mathcal{F}}
\newcommand{\calH}{\mathcal{H}}
\newcommand{\calM}{\mathcal{M}}
\newcommand{\calO}{\mathcal{O}}
\newcommand{\calX}{\mathcal{X}}
\def \a      {\alpha}
\def \e      {\epsilon}
\def \s      {\sigma}
\def \w      {\omega}
\def \l      {\lambda}
\newcommand{\Bal}{\begin{align}}
\newcommand{\Eal}{\end{align}}
\newcommand{\Beq}{\begin{equation}}
\newcommand{\Eeq}{\end{equation}}
\newcommand{\Bit}{\begin{itemize}}
\newcommand{\Eit}{\end{itemize}}
\newcommand{\Ben}{\begin{enumerate}}
\newcommand{\Een}{\end{enumerate}}
\newcommand{\Ba}{\begin{array}}
\newcommand{\Ea}{\end{array}}
\newcommand{\Bvec}{\left(\begin{array}{c}}
\newcommand{\Evec}{\end{array}\right)}
\newcommand{\Bmat}{\left(\begin{array}}
\newcommand{\Emat}{\end{array}\right)}
\newcommand{\Bol}{\begin{outline}}
\newcommand{\Eol}{\end{outline}}
\newcommand{\ATE}{\mathrm{ATE}}
\newcommand{\ORr}{\mathrm{OR}}
\newcommand{\RRm}{\mathrm{RR}}
\newcommand{\KDPE}{\text{KDPE}}
\newcommand{\TMLE}{\text{TMLE}}
\def\mbf#1{\mathbf{#1}}	    
\def\tsf#1{\textsf{#1}}
\icmltitlerunning{Kernel Debiased Plug-in Estimation}
\begin{document}

\twocolumn[
\icmltitle{Kernel Debiased Plug-in Estimation: Simultaneous, Automated Debiasing without Influence Functions for Many Target Parameters}




\begin{icmlauthorlist}
\icmlauthor{Brian Cho}{yyy}
\icmlauthor{Yaroslav Mukhin}{comp}
\icmlauthor{Kyra Gan}{yyy}
\icmlauthor{Ivana Malenica}{sch}
\end{icmlauthorlist}

\icmlaffiliation{yyy}{Department of ORIE, Cornell Tech, NY, USA}
\icmlaffiliation{comp}{Massachusetts Institute of Technology, Cambridge, MA, USA}
\icmlaffiliation{sch}{Department of Statistics, Harvard University, Cambridge, MA, USA}

\icmlcorrespondingauthor{Brian Cho}{bmc233@cornell.edu}
\icmlkeywords{Machine Learning, ICML}

\vskip 0.3in
]



\printAffiliationsAndNotice{}  

\begin{abstract}
When estimating target parameters in nonparametric models with nuisance parameters, substituting the unknown nuisances with nonparametric estimators can introduce ``plug-in bias.'' Traditional methods addressing this suboptimal bias-variance trade-off rely on the \emph{influence function} (IF) of the target parameter. When estimating multiple target parameters, these methods require debiasing the nuisance parameter multiple times using the corresponding IFs, which poses analytical and computational challenges. In this work, we leverage the \emph{targeted maximum likelihood estimation} (TMLE) framework to propose a novel method named \emph{kernel debiased plug-in estimation} (KDPE). KDPE refines an initial estimate through regularized likelihood maximization steps, employing a nonparametric model based on \emph{reproducing kernel Hilbert spaces}. We show that KDPE:
(i) simultaneously debiases \emph{all} pathwise differentiable target parameters that satisfy our regularity conditions,
(ii) does not require the IF for implementation, and
(iii) remains computationally tractable. 
We numerically illustrate the use of KDPE and validate our theoretical results.

\end{abstract}

\section{Introduction}



Estimating a target parameter $\psi(P)$ from $n$ independent samples of an unknown distribution $P$ is a fundamental and
rapidly evolving statistical field.
Driven by data-adaptive machine learning methods and efficiency theory, modern approaches to estimation can achieve optimal performance under minimal assumptions on the true data-generating process (DGP). These statistical learning advancements enable researchers to obtain strong theoretical guarantees and empirical performance while avoiding unrealistic assumptions.

Modern 
estimation
methods 
rely on
the plug-in principle \citep{van2000asymptotic}, which 
substitutes
unknown parameters of the underlying data-generating process with estimated empirical counterparts
(e.g., the sample mean is 
the average of observations over the empirical distribution $\PP_n$ defined by the dataset). Flexible, machine learning (ML) estimation methods have further exploited the plug-in approach. Meta-learners in the causal inference literature use random forests and neural networks to estimate empirical conditional density/regression functions for the downstream task of estimating target parameters such as potential outcome means and treatment effects \citep{K_nzel_2019}. The use of highly adaptive, complex ML algorithms, however, induces \emph{plug-in bias} (first-order bias) that impacts the downstream estimate.

Estimators such as double machine learning (DML) \cite{chernozhukov2017double}, targeted maximum likelihood estimation  
\cite{van2006targeted}, and the one-step correction \cite{bickel998} remove 
plug-in bias and achieve efficiency (i.e., lowest possible asymptotic variance) within the class of \emph{regular and asymptotically linear} (RAL) estimators.
However,
these methods typically require knowledge of the influence function or the efficient influence function (EIF)\footnote{{The notation of EIF arises when we are in the semi-parametric setting where the directions in which we can perturb the data generating distribution are restricted. Otherwise, IF is the EIF.}} of the target parameter. Deriving and computing these functions analytically 
is often
challenging for practitioners \citep{carone2018toward, jordan2022empirical, kennedy2023} due to their dependence 
on both the statistical functional of interest and the statistical model (e.g., assumptions on the DGP). For example, even seemingly innocuous target parameters, such as the average density value, require specialized knowledge to study with conventional techniques \cite{carone2018toward}. 
Further,
the knowledge of the IF/EIF is specific to a single target parameter under a particular
statistical model. 
When estimating multiple target parameters, existing methods require the derivation of the IF/EIF
for \emph{each target parameter},
hindering the adoption of these techniques in practice \cite{hines2022}.


\paragraph{Contributions} 
We propose a generic approach for constructing efficient RAL estimators that combines the TMLE \cite{one_step_tmle} framework with a novel application of the reproducing kernel Hilbert spaces (RKHSs). 
Leveraging the universal approximation property of 
RKHSs,
we construct a debiased distribution (or nuisance) estimate $P^\infty_n$ that has \emph{negligible plug-in bias} and \emph{attains efficiency} under similar assumptions to existing approaches. Contrary to popular approaches, our kernel debiased plug-in approach
\begin{enumerate}[leftmargin=*, itemsep=0pt, parsep=0pt]
    \item offers an \emph{automated framework} that does not require the computation of (efficient) influence function $\phi_P$, debiased/orthogonalized estimating equations, or efficiency bounds from the user, and
    \item the same nuisance/distribution estimate $P^\infty_n$ can be used as a plug-in estimate for \emph{all} pathwise differentiable target parameters $\psi(P)$ that satisfy some standard regularity conditions (provided in Section~\ref{sec:rkhs}).
\end{enumerate}
The technical contribution of our work is to identify the set of sufficient conditions to eliminate the plug-in bias under our proposed framework. In particular,  Assumption~\ref{assump:S5} is our novel regularity assumption on the model $\mathcal{M}$ and functional $\psi$ needed to control the plug-in bias term 
of the proposed KDPE estimator. We numerically validate our results and illustrate that 
our method performs as well as state-of-the-art modern causal inference methods, which explicitly use the functional form of the IF.


{The paper is organized as follows. In Section~\ref{sec:problem_setup}, we describe the problem setup and introduce key concepts related to TMLE and RKHSs.
To ease presentation, 
we focus on
the most generic case where 
1) the estimation problem is nonparametric (i.e., no assumption on the DGP), 
2) the nuisance is considered to be the entire distribution $P$, and
3) the target parameter $\psi$ is a mapping from the probability distribution $P$ to real numbers.
We extend 
our procedure
to 
semiparametric models and other nuisances
in Appendix~\ref{app:DGP12}. 
In Section~\ref{sec:rkhs}, we propose \emph{kernel debiased plug-in estimator},
and
characterize the assumptions needed for
KDPE to be asymptotically linear and efficient.
In Section~\ref{sec:sims}, we provide concrete empirical examples of KDPE for target parameters such as the average treatment effect, odds ratio, and relative risk, in both single-time period and longitudinal settings.}


\subsection{Related Works} \label{sec:related_work}
Our work is related to two types of approaches that can debias a target parameter:
1) IF-based, 
and 2) IF-free computational approaches. 
When compared with these two approaches, \emph{KDPE simultaneously debiases multiple target parameters} and does not require the derivation of IF. 
In addition, KDPE relates to 
undersmoothed \emph{highly adaptive lasso}- (HAL-) MLE, which can be used to simultaneously debias multiple parameters. 
When compared with HAL-MLE, our method is computationally more efficient.

The main techniques for constructing IF-based efficient RAL estimators include estimating equations, 
DML,
one-step correction,
and TMLE.
A general estimating equation approach involves solving
for the target parameter by setting the score equations to zero \citep{robins1986, newey1990, bickel998}.  
Double machine learning 
characterizes the estimand as the solution to a population score equation \citep{chernozhukov2017double}. 
One-step correction method adds
the empirical average of the IF to an initial estimator \citep{bickel998}. We 
defer the discussion on TMLE to
Section \ref{subsec::tmle}.

Current IF-free methods for efficient estimation include a) approximating IF through finite-differencing~\citep{carone2018toward, jordan2022empirical} or Monte Carlo \citep{agrawal2024automated},  and
b) AutoDML~\cite{chernozhukov2022automatic}.
In contrast to methods in a), KDPE 
does not attempt to approximate the IF of a target parameter in any way.
%
AutoDML, on the other hand, relies on
%
an orthogonal reparameterization of the problem to achieve debiasing.
It automates the estimation of the additional nuisance parameter by exploiting the structure of the estimating equations in the particular setting of \citet{chernozhukov2022automatic}.  
KDPE, in contrast, directly considers the plug-in bias and eliminates this term from the final estimate.

The HAL-MLE \citep{laan_undersmooth2022} 
solves the score equation for \emph{all} cadlag functions with bounded L1 norm (which is assumed to include the desired EIF). Like HAL, KDPE solves a rich set of scores, rather than a single targeted score,  to approximately solve the score equation at a $\sqrt{n}$-rate. 
HAL, however, is highly computationally inefficient for large models due to its basis functions growing exponentially with the number of covariates $d$ and polynomially with the sample size $n$. 
KDPE uses only $n$ basis functions, improving computational tractability to solve all scores in a \emph{universal} RKHS.
\section{Problem Setup and Preliminaries} \label{sec:problem_setup}
%
Let $O_1, \ldots, O_n$ denote iid observations drawn from a probability
distribution $P^*$ on the sample space $\calO$.
The distribution $P^*$ is \emph{unknown} but is assumed to belong to a
\emph{nonparametric}\footnote{We provide a simple semiparametric extension, where we restrict the data perturbation directions, 
in Appendix~\ref{app:DGP12}.} collection $\Mcal$, which consists of distributions on $\calO$ dominated by a common measure.
Let $\psi:\calM\rightarrow\RR^m$ be a functional of the model $\Mcal$, also
referred to as the \emph{target parameter}.
Our goal is to ``efficiently'' estimate $\psi(P^*)$ on the basis of the $n$ observations using a plug-in estimator (i.e., by estimating $P^*$
and evaluating $\psi$ on this estimate) \emph{without using influence functions.}
Since our method produces a single debiased estimate $P^\infty_n$ that is \emph{independent} of target parameter $\psi$, we assume that $m=1$ wlog.

\paragraph{Notation} 
The density of $P$ is denoted by the lowercase $p$, which we use interchangeably to refer to the probability measure.
We use $Pf$ and $P[f]$
to indicate the expectation $\mE_P[f] = \int f \, dP$ and $\PP_n$ to denote the
empirical measure $\PP_nf := \frac{1}{n}\sum_{i=1}^n f(O_i)$.
Let $L^2_0(P)$ be the set of all 
square-integrable functions with respect to probability measure $P$, i.e.,
$f:\calO\to\mR$ with
$Pf=0$ and $Pf^2<\infty$. A complete notation table is provided in Appendix~\ref{app:notation}.

To formalize the notation of asymptotic efficiency,  we first introduce RAL estimators in Section~\ref{subsec:RAL_estimator}. In Section~\ref{subsec:plug_in_bias}, we introduce \emph{plug-in bias} and
provide 
the 
assumptions and condition required for plug-in estimators to be \emph{asymptotically linear} and \emph{efficient}. 
We introduce TMLE in Section~\ref{subsec::tmle} 
since KDPE shares the same construction framework as TMLE, with one main modification. Finally, we introduce the necessary concepts related to RKHSs in Section~\ref{subsec:rkhs_prelim}, 
as KDPE employs a RKHS-based model.
\subsection{Regular and Asymptotic Linear Estimators}\label{subsec:RAL_estimator}
%
\textit{Asymptotic linearity} of an estimator
leads to a tractable limiting distribution, resulting in asymptotically valid inference.
It corresponds to the ability to approximate the difference between an estimate and the true value of the target parameter as an average of i.i.d.
random variables. 
\begin{definition}[Asymptotic linearity]\label{defn:LAN}
    An estimator $\hat{\psi}_n$ is asymptotically linear if $\hat{\psi}_n - \psi(P^*) = \PP_n\phi_{P^*} + o_{P^*}(1/\sqrt{n})$,
     where  $\phi_{P^*}:\Ocal \rightarrow \RR^m$ is the corresponding \textit{influence function} for estimator $\hat{\psi}_n$.
\end{definition}
On the other hand, a \textit{regular} estimator attains the same limiting distribution even under perturbations (on the order of $\sqrt{n}$) of the true data-generating distribution, enforcing robustness to distributional shifts.
Since the most efficient (i.e., smallest sampling variance) regular estimator is guaranteed to be asymptotically linear \citep{van1991differentiable},  we restrict our attention to the class of RAL estimators.\footnote{Our nonparametric assumption on the model class $\Mcal$ implies that all RAL estimators have the IF.}
\subsection{Plug-in 
Bias}\label{subsec:plug_in_bias}
To connect plug-in estimators (i.e., $\hat{\psi}_n \equiv \psi(\h{P})$, where $\h{P}$ is our distributional estimate with $n$ samples) and RAL estimators, we consider target parameters that satisfy
\emph{pathwise differentiability} \citep{koshevnik1976non}. This means that given a smoothly parametrized one-dimensional submodel
$\{P_\e\}_{0^- <\e < 0^+}$ of $\calM$, 
the map $\e \mapsto \psi(P_\e)$ is differentiable in the ordinary sense,\footnote{For a concrete definition of the one-dimensional sub-model $\{P_\epsilon\}_{0^- < \epsilon < 0^+}$, we refer readers to Equation \eqref{eq:score_model}, which gives an explicit submodel $p_{\epsilon, h}$. There exist many choices for this sub-model; we opt for the linear model $p_{\epsilon, h}$ here for simplicity.}
and the derivative has a Riesz representation discussed below. Pathwise differentiability of a target parameter admits \emph{regular} estimators that 
(i) converge at the parametric $\sqrt{n}$-rate and 
(ii) whose asymptotic normal distributions vary smoothly with $P$.
To see this, we first introduce the tangent space. 

\paragraph{Scores} 
Let $h = {\frac{d}{d\e}}_{|\e=0} \log p_\e$ be the \emph{score} of the path at
$P_0$ (i.e., $P_{\e=0}$).
Then, the smoothness requirement on the path implies that the expected value of the score function under distribution $P_0$ equals $0$, i.e.,
$P_0 h = 0$, and $P_0 h^2 < \infty$ \cite{van2000asymptotic}.
The collection of scores of all smooth paths through $P_0$ forms a linear space, which we refer to as the \emph{tangent space} of $\calM$ at $P_0$.
Assuming that $\calM$ is nonparametric, for any $P \in \Mcal$, its tangent space equals $L^2_0(P)$.\footnote{Intuitively, the tangent space at distribution $P$ contains all directions through which we can move from the current distribution $P$ to another distribution in model class $\mathcal{M}$.}
Given any distribution $P\in\calM$ and a score $h$, we work with the concrete path\footnote{Intuitively, the set of $\{P_{\epsilon, h}\}_{\epsilon, h}$ can be thought as a reparameterization of the model class $\mathcal{M}$ centered around a local distribution $P$. We refer to Appendix \ref{app:questions_from_GgGE} for further questions regarding the validity of our linear submodel.}
\begin{align} \label{eq:score_model}
    p_{\e,h} 
    \coloneqq
    p(\e | h) \coloneqq  (1+\e h)p.
\end{align}

\paragraph{Influence function} 
While the IF
defines the limiting sampling variance of a RAL estimator in Definition \ref{defn:LAN}, it has additional, related interpretations: 
the IF is the Riesz representer of the derivative functional. Pathwise differentiability of $\psi$ at $P$  implies that there exists a continuous linear functional\footnote{Linear with respect to $h$.} $D_{P}\psi:L^2_0(P)\rightarrow \mR$
such that for any smooth path with a score of $h$, the derivative of the functional $\psi(\cdot)$ at distribution $P$, ${\frac{d}{d\epsilon}}_{|\e=0}\psi(P_{\epsilon,h})$, is equivalent to evaluating  $D_{P}\psi[\cdot]$ at score $h$, i.e., ${\frac{d}{d\e}}_{|\e=0}\psi(P_{\e,h}) = D_{P}\psi[h]$.
The \emph{influence function} $\phi_P \in L^2_0(P)$ of parameter $\psi$ is the Riesz representer of its derivative $D_P\psi$ for the $L^2_0(P)$ inner product and is characterized by the property that $D_P\psi[h] = \brk{h}{\phi_P}_{L^2_0(P)}$ for every $h\in L^2_0(P)$. 

Next,  Lemma~\ref{lemma:expension}\citep{bickel998} decomposes the estimation error,  $\psi(\h{P})-\psi(P^*)$, by considering the following concrete path between $\hat P$ and $P^*$: $\hat p(\e | h) = (1-\e)\hat p + \e p^*$. We note that under this path, the direction $h$ becomes $h = p^*/p - 1$.
By leveraging the Riesz Representation property, we obtain the error decomposition shown below:
 %
%
%
\begin{lemma}[\citealt{bickel998}] \label{lemma:expension} 
  Let $\psi:\calM\to\mR$ be pathwise differentiable and $\h P, P^*\in\calM$.
  Assume that $\h P$ satisfies:
  (1) $P^*$ is absolutely continuous with respect to $\h P$, and
  (2) Radon–Nikodym derivative $dP^*/d\h P$ is square integrable under $\h P$.
  Then the following von Mises expansion holds:
  \begin{align}  \label{eq:asyExpansion} 
      \psi(\h{P})-\psi(P^*) 
      &=
      \PP_n\phi_{P^*} - \PP_n\phi_{\h{P}} 
      \\&+
      (\PP_n-P^*)\big[\phi_{\h{P}}-\phi_{P^*}\big] 
      +
      R_2(\h{P},P^*)\notag
    \end{align} 
\noindent where $R_2(\h{P},P^*)$ is
  the
  second-order (quadratic) remainder in the difference between $\h{P}$ and
  $P^*$. 
\end{lemma}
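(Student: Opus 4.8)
The plan is to read off the claimed identity as an exact first-order Taylor expansion of the map $\epsilon\mapsto\psi\bigl(\h p(\epsilon\mid h)\bigr)$ along the linear path $\h p(\epsilon\mid h)=(1-\epsilon)\h p+\epsilon p^*$ that connects $\h P$ (at $\epsilon=0$) to $P^*$ (at $\epsilon=1$), followed by an elementary rearrangement that uses only the mean-zero property of influence functions. Assumptions (1) and (2) enter exactly once: to certify that this path is an admissible one-dimensional submodel with a square-integrable score, so that pathwise differentiability of $\psi$ applies.

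\textbf{Step 1: the linear path is a valid submodel with score $dP^*/d\h P-1$.} For $\epsilon\in[0,1]$ the function $(1-\epsilon)\h p+\epsilon p^*$ is a nonnegative density, and since $\calM$ is nonparametric it belongs to $\calM$; it equals $\h p$ at $\epsilon=0$. Assumption (1) ($P^*\ll\h P$) makes $h:=p^*/\h p-1=dP^*/d\h P-1$ well defined $\h P$-a.e., so the path can be written in the form \eqref{eq:score_model}, $\h p(\epsilon\mid h)=(1+\epsilon h)\h p$, with score $\frac{d}{d\epsilon}\big|_{\epsilon=0}\log\h p(\epsilon\mid h)=h$. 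Then $\h P h=\int(dP^*/d\h P)\,d\h P-1=0$ and, by Assumption (2), $\h P h^2=\int(dP^*/d\h P-1)^2\,d\h P<\infty$, so $h\in L^2_0(\h P)$, which under the nonparametric model is the tangent space at $\h P$. Assumption (2) also gives, via Cauchy--Schwarz, $\int|\phi_{\h P}|\,dP^*\le\norm{\phi_{\h P}}_{L^2(\h P)}\,\norm{dP^*/d\h P}_{L^2(\h P)}<\infty$, so $P^*\phi_{\h P}$ is finite.

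\textbf{Step 2: the first-order term equals $-P^*\phi_{\h P}$.} Pathwise differentiability of $\psi$ at $\h P$ and the Riesz representation of $D_{\h P}\psi$ give
\[
\tfrac{d}{d\epsilon}\Big|_{\epsilon=0}\psi\bigl(\h p(\epsilon\mid h)\bigr)=D_{\h P}\psi[h]=\brk{h}{\phi_{\h P}}_{L^2_0(\h P)}=\int\Bigl(\tfrac{dP^*}{d\h P}-1\Bigr)\phi_{\h P}\,d\h P=P^*\phi_{\h P}-\h P\phi_{\h P}=P^*\phi_{\h P},
\]
the last step using $\h P\phi_{\h P}=0$. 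Defining $R_2(\h P,P^*):=\psi(\h P)-\psi(P^*)+P^*\phi_{\h P}$ --- precisely the remainder left after subtracting the directional-derivative term, hence second order in $p^*-\h p$ (explicitly $R_2=-\int_0^1(1-t)\,\tfrac{d^2}{d\epsilon^2}\psi(\h p(\epsilon\mid h))\big|_{\epsilon=t}\,dt$ when $\psi$ is twice pathwise differentiable) --- yields $\psi(\h P)-\psi(P^*)=-P^*\phi_{\h P}+R_2(\h P,P^*)$. Then, since $\phi_{P^*}\in L^2_0(P^*)$ gives $P^*\phi_{P^*}=0$, we have $-P^*\phi_{\h P}=-P^*[\phi_{\h P}-\phi_{P^*}]$, and adding and subtracting $\PP_n\phi_{\h P}$ and $\PP_n\phi_{P^*}$ produces $-P^*[\phi_{\h P}-\phi_{P^*}]=\PP_n\phi_{P^*}-\PP_n\phi_{\h P}+(\PP_n-P^*)[\phi_{\h P}-\phi_{P^*}]$; substituting gives \eqref{eq:asyExpansion}.

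The only real obstacle is Step 1--2: checking that the linear interpolation is an honest submodel and differentiating $\psi$ along it. This is exactly where the two hypotheses on $\h P$ are needed --- without square-integrability of $dP^*/d\h P$ the score $h$ may fail to lie in $L^2_0(\h P)$ and the directional derivative $D_{\h P}\psi[h]$ need not be finite (nor need $P^*\phi_{\h P}$ be well defined). Once the first-order term is pinned down as $-P^*\phi_{\h P}$, the remainder of the argument is pure bookkeeping: the empirical-process term $(\PP_n-P^*)[\phi_{\h P}-\phi_{P^*}]$ and the leading pair $\PP_n\phi_{P^*}-\PP_n\phi_{\h P}$ arise solely from adding and subtracting $\PP_n$ evaluations and invoking $P^*\phi_{P^*}=0$.
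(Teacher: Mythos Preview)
Your argument is correct and follows precisely the approach the paper sketches in the paragraph immediately preceding the lemma: the linear path $\h p(\e\mid h)=(1-\e)\h p+\e p^*$ with score $h=p^*/\h p-1$, combined with the Riesz representation $D_{\h P}\psi[h]=\brk{h}{\phi_{\h P}}_{L^2_0(\h P)}$, followed by the add-and-subtract bookkeeping. The paper itself does not supply a proof of this lemma---it is cited from \citet{bickel998}---so your write-up in fact fills in exactly the details the paper only hints at.
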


The expansion in Lemma \ref{lemma:expension} closely resembles Definition~\ref{defn:LAN}. It
motivates the following common
assumptions that control the behavior of the empirical process term $(\PP_n-P^*)\big[\phi_{\h{P}}-\phi_{P^*}\big]$ and the second-order reminder $R_2(\h{P},P^*)$
\citep{van2006targeted}.

\begin{assumption}\label{assump:empprocess} 
  Assume that (i) $\norm{\phi_{\h P} - \phi_{P^*}}_{L^2(P^*)} = o_{P^*}(1)$
  and that (ii) there exists an event $\Omega$ with $P^*(\Omega)=1$ such that the set
  of functions $\Set{\phi_{\h P_n(w)}}{w\in\Omega, n\in \mN}$ is
  $P^*$-Donsker. 
  Then the empirical process term satisfies:
  $ 
  (\mP_n-P^*)\big[\phi_{\h P}-\phi_{P^*}\big]
      =
      o_{P^*}(1/\sqrt{n}).
  $ 
\end{assumption} 
\begin{assumption}\label{assump:secondorderremainder} 
  Assume that $\h P$ converges sufficiently fast and $\psi$ is sufficiently
  regular so that the second-order remainder term
  $R_2( \h{P}, P^*)$ is $o_{P^*}(1/\sqrt{n})$.
\end{assumption} 
We note that Assumptions~\ref{assump:empprocess} and \ref{assump:secondorderremainder} are standard in the TMLE literature~\cite{van2006targeted}.
The goal of this work is to construct a plug-in estimator
$\psi(\widehat P)$ that is asymptotically linear and converges at the parametric (efficient) $\sqrt{n}$-rate to the Gaussian asymptotic distribution $N(0, P^*[\phi_{P^*}^2])$.
Since $\PP_n\phi_{P^*}$ governs the asymptotic distribution of $\psi(\widehat P)$,
Assumptions~\ref{assump:empprocess} and \ref{assump:secondorderremainder} leave us with a single term that must converge at an $o_{P^*}(1/\sqrt{n})$ rate:
$$\psi(\h{P})-\psi(P^*) = \PP_n\phi_{P^*} - \PP_n\phi_{\h{P}} + o_{P^*}(1/\sqrt{n}). 
$$
We denote this remaining term $\PP_n \phi_{\widehat P}$ as the \emph{plug-in bias}. 
In contrast, a naive plug-in estimator may have a first-order/plug-in bias term that dominates the $\sqrt{n}$-asymptotics.
\subsection{TMLE Preliminaries}\label{subsec::tmle} 
We briefly recap TMLE~\citep{van2006targeted}, as its construction closely resembles that of our estimator. TMLE is a plug-in estimator that satisfies Definition~\ref{defn:LAN} and
is constructed in two steps.
First, one obtains a preliminary estimate $P^0_n$ of $P^*$, which is typically consistent but not efficient (slower than $\sqrt{n}$-rate). 
The TMLE solution to debias $P^0_n$ for the parameter $\psi$ is to perturb
$P^0_n$ in a way that 
(i) increases the sample likelihood of the distribution estimate, and 
(ii) sets the plug-in bias $\mP_n\phi_{\h P} = 0$, maintaining consistency and mitigating first-order bias. \citet{van2006targeted} consider the parametric model in
\eqref{eq:score_model} with the score equal to the influence function
$\phi_{P^0_n}$ and update the estimate to be the MLE
$p^1_n = (1+\epsilon_n \phi_{P^0_n})p^0_n$,
where
\begin{align} \label{eq:TMLE} 
    \e_n
    &\coloneqq
    \argmax_{\e: p^0_n(\e) \in \calM}
    \sum_{i=1}^n \log \left(1 + \e \phi_{P^0_n}(O_i)\right)
        p^0_n(O_i).
      \end{align} 
Assuming that we converge to an interior point such that first-order conditions hold, 
the updated estimate $p^1_n$ solves the following: 
 \begin{align} \label{eq:EffScoreEstimEq} 
    0 = \PP_n \frac{\frac{d}{d\e} p^0_n(\e)}{ p^0_n(\e)}
    =
    \frac{1}{n}\sum_{i=1}^n \frac{
        \phi_{P^0_n}(O_i)
    }{
        1+\e \phi_{P^0_n}(O_i)}
        = \PP_n[\frac{
        \phi_{P^0_n}
    }{
        1+\e \phi_{P^0_n}
    }].
  \end{align} 
By iterating the TMLE step, we get a sequence of updated
estimates $\{p^\ell_n\}_{\ell=0}^\infty$.
Assuming that $p^\ell_n$ converges as $\ell\to\infty$ sufficiently regularly,\footnote{We have omitted details from the main exposition of TMLE to provide a brief overview of the method. We refer to the last paragraph on Page 11, \cite{van2006targeted}, for the full characterization.} the limit
$p^\infty_n$ is a \emph{fixed point} of \eqref{eq:TMLE}, the corresponding
argmax in \eqref{eq:TMLE} is
$\e_n =0$, 
and
\eqref{eq:EffScoreEstimEq} becomes the plug-in bias term: $\PP_n \phi_{P^\infty_n}=0$.
This property of $P^\infty_n$ achieves asymptotic linearity and efficiency for the plug-in estimator $\psi( P^\infty_n)$ under Assumptions \ref{assump:empprocess} and \ref{assump:secondorderremainder}.

We construct KDPE following
the TMLE framework with one 
modification. Instead of reducing $\calM$ into a one-dimensional submodel by perturbing $P^0_n$ in the direction of $\phi_{P_n^0}$, we construct RKHS-based submodels.
These submodels are of infinite dimensions and are independent of both $\phi_P$ and $\psi$.  Next, we introduce RKHS properties.

%
%
%
%

\subsection{RKHS Preliminaries}
\label{subsec:rkhs_prelim}

Let $\Xcal$ be a non-empty set.
A function $K: \Xcal \times \Xcal \rightarrow \RR$ is called
positive-definite (PD) if for \emph{any} finite sequence of inputs 
$\mbf{x}= [x_1, ..., x_n]^\top \in \Xcal^n$, the matrix 
$ 
K_{\mathbf{x}}
\coloneqq
[K(x_i, x_j)]_{ij}
$ 
is symmetric and positive-definite~\citep{micchelli2006universal}. 
The \emph{kernel function at $x$} is the univariate map 
$y\mapsto k_x(y) \coloneqq K(x,y)$;
it is common to overload the notation with vector-valued maps
$k_\mathbf{x} \coloneqq [k_{x_1}, \ldots, k_{x_n}]^\top$; letting $\mathbf{y}= [y_1, ..., y_m]^\top$ be a column vector, we define
the \emph{kernel matrix}
$K_{\mathbf{xy}} \coloneqq [K(x_i, y_j)]_{ij} \in \mR^{n\times m}$ and
$K_\mbf{x}\coloneqq K_\mbf{xx}$. 

The \emph{reproducing kernel Hilbert space} associated with kernel $K$, denoted $\Hcal_K$, is a unique Hilbert space of functions on $\Xcal$ that satisfies the following properties: (i) for any $x\in \Xcal$, the function $k_x \in \Hcal_K$, and (ii) $\langle f, k_x \rangle_{\Hcal} = f(x)$ for all $f \in \Hcal_K$. 
RKHSs have two key features that facilitate our results:
(i) suitable choices of RKHSs, designated as universal kernels, are sufficiently rich to approximate any influence function, and 
(ii) optimization of sample-based objectives (e.g., likelihood) over the
RKHS-based models can be efficiently carried out using a representer-type theorem,
allowing for a tractable solution.

Denoting $\calC_0(\calX)$ as the space of continuous functions that vanish at
infinity with the uniform norm,
we say that a kernel $K$ is $\calC_0$ provided that $\calH_K$ is a subspace of
$C_0(\calX)$.
We say that a $\calC_0$-kernel $K$ is  \emph{universal} \citep{carmeli2010vector} if it satisfies the following property:


\begin{property}[Universal kernels] \label{def:uk_cf} 
  Let $\Xcal$ be a closed subset of $\mR^n$, and let $K$
  be a PD kernel such that the map $x\mapsto K(x,x)$ is bounded and 
  $k_x \in \calC_0(\calX)$ for all $x\in\calX$.
  Then $\calH_K \subset \calC_0(\calX) \subset L^p(P) \subset L^q(P)$
  for all $1\le p<q\le\infty$ and every probability measure $P$. Furthermore, the kernel $K$ is universal if and only if the space $\calH_K$ is dense in $L^p(\calX, P)$ for at least one (and consequently for all) $1\le p<\infty$ and each probability measure $P$ on $\calX$.
\end{property} 

Property~\ref{def:uk_cf} implies that universal kernels contain sequences that can approximate the influence function $\phi_P$ arbitrarily well with
respect to the $L^2(P)$ norm for any $P$. Universal kernels include the  \emph{Gaussian kernel} $K(x,y) =  \exp(-\|x-y\|_2^2)$, which is the primary example in our work. Our debiasing method relies on constructing local (to $P$) submodels of $\calM$ indexed by a set of scores at $P$.
Since scores are $P$-mean-zero, 
given any PD kernel $K$ on $\calX$ that is bounded, $C_0$, and universal, we transform $K$ into a mean-zero kernel with respect to $P$
(proof in Appendix~\ref{proof:mean-zero-kernel}):
\begin{prop}[Mean-zero RKHS] \label{thm:mean-zero-kernel} 
  Let
  $
    \calH_P \coloneqq \Set{h\in\calH_K}{Ph = 0}
  $
  be the subspace of $\calH_K$ containing all $P$-mean-zero functions.
  Then 1) $\calH_P= \calH_{K_P}$ is closed in $\calH_K$ and is also an RKHS, with reproducing kernel
  $
  K_P(x,y) 
  \coloneqq
  K(x,y) 
  - 
  \frac{
    \int_{\Xcal}K(x,s)dP(s)\int_{\Xcal}K(y,s)dP(s)
  }{
    \int_{\Xcal \times \Xcal}K(s,t)dP(s)dP(t)
  }
  $, and 2) $\calH_P$ is dense in $L^2_0(P)$.
\end{prop}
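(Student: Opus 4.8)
The plan is to realize $\calH_P$ as the orthogonal complement in $\calH_K$ of a single distinguished vector — the kernel mean embedding of $P$ — and then read off all three claims from elementary Hilbert-space projection arguments. For the first step, observe that since $\norm{k_s}_{\calH_K}^2 = K(s,s)$ is bounded on $\calX$ by hypothesis, the Bochner integral $\mu_P \coloneqq \int_\calX k_s\,dP(s)$ exists as an element of $\calH_K$ (the required measurability of $s\mapsto k_s$ is routine, $\calH_K$ being separable). By the reproducing property, for every $h\in\calH_K$,
\[
  Ph \;=\; \int_\calX h(s)\,dP(s) \;=\; \int_\calX \brk{h}{k_s}_{\calH_K}\,dP(s) \;=\; \brk{h}{\mu_P}_{\calH_K},
\]
so $h\mapsto Ph$ is a bounded linear functional on $\calH_K$ and $\calH_P = \{\mu_P\}^{\perp}$ is a closed subspace. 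A closed subspace of an RKHS, equipped with the inherited inner product, is again an RKHS, which already yields the first half of claim~1).

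For the kernel of $\calH_P$, first note $\mu_P\ne 0$: otherwise $Ph=0$ for all $h\in\calH_K$, forcing the dense set $\calH_K$ (dense in $L^2(P)$ by Property~\ref{def:uk_cf}) to lie inside the proper closed subspace $L^2_0(P)\subsetneq L^2(P)$, which is absurd. Hence $\norm{\mu_P}_{\calH_K}^2 = \brk{\mu_P}{\mu_P}_{\calH_K} = \int_{\calX\times\calX} K(s,t)\,dP(s)\,dP(t) > 0$, and the orthogonal projection of $\calH_K$ onto $\calH_P$ is $\Pi h = h - \norm{\mu_P}_{\calH_K}^{-2}\,\brk{h}{\mu_P}_{\calH_K}\,\mu_P$. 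The classical characterization of the reproducing kernel of a closed subspace gives $K_P(x,y) = \brk{\Pi k_x}{\Pi k_y}_{\calH_K} = (\Pi k_x)(y)$; substituting $\brk{k_x}{\mu_P}_{\calH_K} = \mu_P(x) = \int K(x,s)\,dP(s)$ (and the analogue at $y$) together with the value of $\norm{\mu_P}_{\calH_K}^2$ just computed reproduces exactly the displayed formula for $K_P$. Uniqueness of the RKHS attached to a kernel then identifies $\calH_{K_P}$ with $\calH_P$, finishing claim~1).

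For claim~2), the inclusion $\calH_P\subseteq L^2_0(P)$ is immediate and $L^2_0(P)$ is closed in $L^2(P)$. Conversely, fix $g\in L^2_0(P)$ and use universality (Property~\ref{def:uk_cf}) to pick $h_n\in\calH_K$ with $h_n\to g$ in $L^2(P)$. Then $Ph_n = P(h_n-g)\to 0$ since $\abs{P(h_n-g)}\le\norm{h_n-g}_{L^2(P)}$. As $\Pi h_n\in\calH_P$ and $\Pi h_n - h_n = -\norm{\mu_P}_{\calH_K}^{-2}(Ph_n)\mu_P$, we get $\norm{\Pi h_n - h_n}_{L^2(P)} = C_P\,\abs{Ph_n}\to 0$, where $C_P\coloneqq\norm{\mu_P}_{L^2(P)}/\norm{\mu_P}_{\calH_K}^2<\infty$ because $\calH_K\subseteq L^2(P)$ (again Property~\ref{def:uk_cf}). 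Hence $\Pi h_n\to g$ in $L^2(P)$, so $\calH_P$ is $L^2(P)$-dense in $L^2_0(P)$.

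The argument is mostly bookkeeping with the reproducing-property identities; the only points that need care are (a) verifying $\mu_P$ is genuinely an element of $\calH_K$ — handled by Bochner-integrability, which is precisely where the boundedness of $x\mapsto K(x,x)$ enters — and (b) ensuring the normalizing denominator $\int_{\calX\times\calX} K\,dP\,dP = \norm{\mu_P}_{\calH_K}^2$ is strictly positive, so that both $\Pi$ and the formula for $K_P$ are well defined; this is where universality is invoked a second time. I do not expect a serious obstacle beyond keeping these two sanity checks and the inner-product computations straight.
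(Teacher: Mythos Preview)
Your proposal is correct and follows essentially the same route as the paper: both identify the kernel mean embedding as the Riesz representer of the expectation functional, realize $\calH_P$ as its one-dimensional orthogonal complement, obtain $K_P$ via the projection, and prove density by projecting an $L^2(P)$-approximating sequence onto $\calH_P$. Your write-up is in fact more careful than the paper's sketch --- you explicitly justify Bochner integrability of $\mu_P$ and the strict positivity of the denominator $\norm{\mu_P}_{\calH_K}^2$, both of which the paper glosses over.
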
 

\section{Debiasing with RKHS} \label{sec:rkhs}
We propose a novel distributional estimator $P_n^\infty$, 
which has the following properties:
(i) it takes two user inputs: a pre-estimate $P^0_n$ and an RKHS $\calH_K$
(equivalently, the kernel $K$) on the sample space $\calO$;
(ii) it provides a debiased and efficient plug-in $\psi(P^\infty_n)$ for
estimating any parameter $\psi$ under the stated regularity conditions;
(iii) it solves the influence curve estimating equation asymptotically, thereby eliminating the plug-in bias of $P^0_n$ but
unlike TMLE this holds simultaneously for all
pathwise differentiable parameters; 
(iv) it does not require the influence function $\phi_{P}$ to be
implemented and does not depend on any $\psi$. 
We now describe each step of this estimator in detail. 

\textbf{Local fluctuations of pre-estimate$\;$} 
Let $\calH_K$ be an RKHS with a bounded kernel $K$ on the sample space $\calO$
and $\calH_P$ be the subspace of scores at $P$ within
$\calH$.
For each $P\in\calM$,
\begin{align} \label{eq:hcalmodel}
  \widetilde\calM_{P}
  \coloneqq &
  \{
    p_{h} \coloneqq p(h) \coloneqq (1+h)p  ; \\ &
    h\in\calH_{P} \text{ such that } (1+h)p > 0
  \} \cap \calM.\notag
\end{align}
This submodel\footnote{We provide further details regarding the RKHS-constrained model $\widetilde{\mathcal{M}}_P$ and its interpretation in Appendix \ref{app:questions_from_GgGE}.} allows $P$ to be perturbed along any direction $h\in\calH_P$ that
results in a valid distribution in the model $\calM$.
These directions $h$ must have zero mean under $P$, 
requiring
modifying
the kernel $K$ as described in Proposition~\ref{thm:mean-zero-kernel}.

\textbf{Regularized MLE$\;$} 
To choose the debiasing update from $\widetilde\calM_P$, we follow TMLE (Section~\ref{subsec::tmle}) and maximize the empirical likelihood.
However, since our fluctuations are indexed by the infinite-dimensional set
$\calH_P$, the MLE problem is generally not well-posed \citep{fukumizu2009} and
must be regularized.
This is achieved with an RKHS-norm
penalty,
leading to the following optimization problem as a mechanism to choose the
update for an estimate $P$:
%
\begin{align} \label{prob:kMLE} 
  \argmin_{h\in\calH_P : p(h) \in \widetilde\calM_{P}}
  \mP_n \big[ -\log (1+h)p \big]
  +
  \l \norm{h}^2_{\calH_P},
\end{align} 
where the regularization parameter $\l\ge 0$  regularizes the complexity of
the perturbation $h$ to $P$.
Note that \eqref{prob:kMLE} is an infinite-dimensional optimization problem.
Next, Theorem~\ref{thm:representer} 
states that the solution to \eqref{prob:kMLE} is guaranteed to be
contained in an explicitly characterized finite-dimensional subspace, simplifying the optimization problem.
Let $\calH_P$ be an RKHS of scores at $P$ with a 
mean-zero
PD
kernel
$K_P$ on the sample space $\calO$,
and $\mbf{O}=[O_1,\ldots, O_n]^\top\in\calO^n$ be distinct data points. Define the $n$-dimensional subspace
$ 
  S_n
  \coloneqq
  \vspan\set{k_{O_1}, \ldots, k_{O_n}}\subset\calH
$ 
and let
$\pi_n:\calH\to S_n$ be the orthogonal projection onto $S_n$.
\begin{theorem}[RKHS representer] \label{thm:representer} 
Assume that
$ \widetilde\calM_P$
is closed under the
projection onto $S_n$:
whenever 
$p(h)\in\widetilde\calM_P$,
so is 
$p(\pi_n h)\in\widetilde\calM_P$.
Then for any loss function of the form 
$
  L_{p,h}(O) = \tilde L(O,p(O),h(O)),
$
the minimizer of the regularized empirical risk 
\begin{align} \label{prob:ermH} 
  h^* \in 
  \argmin_{h\in\calH : p(h)\in\widetilde\calM_P}
  \mP_n L_{p,h} + \l\norm{h}^2_{\calH}
\end{align} 
admits the following representation:
$ 
  h^*
  =
  \mbf{\a}^\top k_\mbf{O} + \check h,
  $ 
for some $\mbf\a\in\mR^n$,
where, if $\l=0$, then $\check h\in\calH$ is any such that
$\check h(O_i)=0$ for each $i\in [n]$,
and, if $\l>0$, then $\check h\equiv 0\in\calH$.
\end{theorem}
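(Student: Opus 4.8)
The plan is to adapt the classical representer theorem to our setting, where the two new ingredients are (i) the loss $\tilde L$ need not be convex and (ii) the optimization is constrained to $p(h)\in\widetilde\calM_P$; the closedness hypothesis in the statement is precisely what makes the argument go through. Throughout I work inside $\calH=\calH_P$ with its mean-zero reproducing kernel $K_P$, and let $\pi_n$ be the orthogonal projection onto the finite-dimensional (hence closed) subspace $S_n=\vspan\set{k_{O_1},\ldots,k_{O_n}}$, writing $h=\pi_n h+h^\perp$ with $h^\perp\in S_n^\perp$. First I would observe that, by the reproducing property, $h(O_i)=\brk{h}{k_{O_i}}_{\calH}=\brk{\pi_n h}{k_{O_i}}_{\calH}=(\pi_n h)(O_i)$ for every $i\in[n]$, because $k_{O_i}\in S_n$. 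Since the loss has the pointwise form $L_{p,h}(O)=\tilde L(O,p(O),h(O))$, the empirical risk $\mP_n L_{p,h}=\frac1n\sum_{i=1}^n\tilde L(O_i,p(O_i),h(O_i))$ depends on $h$ only through $(h(O_1),\ldots,h(O_n))$, hence only through $\pi_n h$; in particular it is unchanged when $h$ is replaced by $\pi_n h$.

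Next I would combine this with two facts. By Pythagoras, $\norm{h}_{\calH}^2=\norm{\pi_n h}_{\calH}^2+\norm{h^\perp}_{\calH}^2\ge\norm{\pi_n h}_{\calH}^2$, with equality iff $h^\perp=0$; and $\pi_n h\in S_n\subset\calH_P$, so it remains a valid score direction, while the closedness hypothesis guarantees $p(\pi_n h)\in\widetilde\calM_P$ whenever $p(h)\in\widetilde\calM_P$. Hence for any feasible $h$ the point $\pi_n h$ is feasible and satisfies $\mP_n L_{p,\pi_n h}+\l\norm{\pi_n h}_{\calH}^2\le\mP_n L_{p,h}+\l\norm{h}_{\calH}^2$, with strict inequality when $\l>0$ and $h^\perp\ne 0$. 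Now take any minimizer $h^*$ of \eqref{prob:ermH} and set $\check h:=h^*-\pi_n h^*\in S_n^\perp$. Since $\pi_n h^*\in S_n$, we can write $\pi_n h^*=\mbf{\a}^\top k_\mbf{O}$ for some $\mbf\a\in\mR^n$, and orthogonality gives $\check h(O_i)=\brk{\check h}{k_{O_i}}_{\calH}=0$ for all $i$, which is the claimed representation when $\l=0$. When $\l>0$, a minimizer must have $\check h=0$ (otherwise $\pi_n h^*$ would be a strictly better feasible point), so $h^*=\mbf{\a}^\top k_\mbf{O}$.

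The only genuine obstacle is the feasibility constraint: in the unconstrained representer theorem one freely replaces $h$ by $\pi_n h$, but here $p(\pi_n h)$ need not be a valid element of $\widetilde\calM_P$ in general, so the argument hinges entirely on the stated closure property of $\widetilde\calM_P$ under the projection $p(h)\mapsto p(\pi_n h)$. Everything else is routine: the reduction uses only the reproducing property, Pythagoras, and the pointwise structure of $L$, and it needs no convexity of $\tilde L$ because we reason directly about an arbitrary minimizer rather than through first-order optimality conditions.
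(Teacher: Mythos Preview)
Your proposal is correct and follows essentially the same argument as the paper's proof: orthogonally decompose $h=h_S+h_S^\perp$ with respect to $S_n$, use the reproducing property to see that the empirical risk depends only on $h_S$, invoke the closedness hypothesis for feasibility of $h_S$, and apply Pythagoras to the penalty. Your write-up is slightly more explicit in separating the $\l=0$ and $\l>0$ cases, but the structure and key steps are identical.
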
 


Our choice of loss function in \eqref{prob:kMLE}, $L_{p,h}(O) = -\log\left((1+h(O))p(O)\right)$, satisfies the loss function conditions of Theorem \ref{thm:representer}, and therefore admits a solution of the form $h = \alpha^\top k_{\mathbf{O}} + \check{h}$. Thus, Theorem~\ref{thm:representer} implies that in order to minimize \eqref{prob:kMLE}
over the perturbations $\widetilde\calM_P$,
it suffices to consider the minimization problem over the $n$-dimensional submodel
\begin{align} \label{eq:Mn} 
  \widetilde\calM_P \supset
  \widetilde\calM_{P,n}
  \coloneqq & 
  \{
    p(\mbf{\a}^\top k_\mbf{O}) \coloneqq 
    (1 + \mbf{\a}^\top k_\mbf{O})p \ ; \\
  &
    \mbf{\a}\in\mR^n \text{ such that } p(\mbf{\a}^\top k_\mbf{O}) > 0
  \} \cap \calM,
  \notag 
\end{align} 
leading to the following simpler (parametric) MLE:
\begin{align} \label{eq:kMLEsol} 
  &\mbf{\a}_n(p) \coloneqq
  \mbf{\a}_n(p;\l)
  \coloneqq \\
  &\argmin_{\mbf{\a}\in\mR^n: p(\mbf{\a}^\top k_\mbf{O})\in\widetilde\calM_{P,n}} \mP_n \big[
    -\log (1 + \mbf{\a}^\top k_\mbf{O})p
  \big]
  +
  \l\mbf{\a}^\top K_\mbf{O} \mbf{\a}.\notag
\end{align} 


\begin{remark}[Regularization]\label{remark:regularization}
    The MLE step in \eqref{prob:kMLE} serves as an update and stopping criterion in KDPE. It does not attempt to approximate the values of the IF. Since we are not initializing an estimation problem from scratch but rather commencing with a pre-existing estimate assumed to be of high quality, there is no necessity in MLE \eqref{prob:kMLE} to conform to any specific functional form or regularity, such as sparsity or other typical assumptions found in nonparametric estimation problems that might justify regularization. The objective of solving \eqref{prob:kMLE} is to update the distributional estimate in a manner that (i) maintains the consistency of the naive plugin (Assumptions \ref{assump:empprocess} and \ref{assump:secondorderremainder}) and (ii) satisfies the score equation \eqref{eq:ScoreEquations} for every function in the RKHS (not a subset). Consequently, we have the freedom to choose any solution for \eqref{prob:kMLE} that fulfills requirements (i) and (ii), without imposing any additional constraints on this selection. 
    
    The proof of the asymptotic linearity and efficiency of KDPE (Theorem~\ref{thm:main}) relies on the universality of the RHKS $\mathcal{H}_{P_n^{\infty}}$ rather than the span of $\{K_{P_n^{\infty}}(O_1, \cdot), ..., K_{P_n^{\infty}}(O_n, \cdot)\}$.
\end{remark}
\subsection{Kernel Debiased Plug-in Estimator (KDPE)} 
Given a pre-estimate $P^0_n$ and a PD kernel $K$, we
\begin{itemize}[itemsep=0pt, parsep=0pt]
  \item[(i)]
    construct fluctuations $\widetilde\calM^0 \equiv \widetilde\Mcal_{P_n^0}$
    around $P^0_n$ defined in
    Equation \eqref{eq:hcalmodel} by computing the zero-mean kernel
    $K_{P^0_n}$ as described in Proposition \ref{thm:mean-zero-kernel};
  \item[(ii)] solve the regularized MLE \eqref{prob:kMLE}
    over scores $\calH_{P^0_n}$ via the equivalent simpler optimization
    \eqref{eq:kMLEsol} to find $\mbf\a^0_n \equiv \mbf\a_n(p_n^0)$,
    which defines the MLE score $[\mbf\a^0_n]^\top k^0_\mbf{O}$ .
\end{itemize} 
As a result, we obtain the $1^\text{st}$-step of our estimator
$ 
  p^1_n
  =
  p_n^0\big([\mbf\a^0_n]^\top k_{\mbf{O}}^0\big)
  =
  \big(1 + [\mbf\a^0_n]^\top k^0_\mbf{O}\big) p^0_n.
$ 
We define the $\ell+1^\text{th}$-step of KDPE recursively by
\begin{align} \label{eq:tmlek} 
  &p^{\ell+1}_n
  \coloneqq
  p_n^\ell\big([\mbf\a^\ell_n]^\top k_{\mbf{O}}^\ell\big)
  =
  \big(1 + [\mbf\a^\ell_n]^\top k^\ell_\mbf{O}\big) p^\ell_n
  \\ &\text{and}\quad
  \mbf\a^\ell_n\equiv \mbf\a_n(p^\ell_n)
  .
\end{align} 
This recursion terminates when $\mbf\a^\ell_n=0$, indicating that
$p_n^\ell$ is a local optimum for the likelihood over submodel
$\widetilde\calM^\ell$.
Similarly to TMLE,
we assume
the following 
about
the convergence (in the iteration step $\ell$) of the KDPE recursion:


\begin{assumption}[Convergence and termination at an interior point] 
  \label{assump:S3}
  We assume that the $\ell^\text{th}$-step of KDPE estimate $p_n^\ell$ converges almost
  surely to some limit $p_n^\infty$ in the interior of $\calM$, and that this
  limiting distribution $P_n^\infty$ is a local minimum of the regularized
  empirical risk \eqref{prob:kMLE} in the interior of the perturbation space
  $\widetilde\calM^\infty$.
\end{assumption} 
Assumption \ref{assump:S3} requires the convergence of our procedure to a limit
$P_n^\infty$, 
ensuring that this limit 
satisfies an interior condition on 1) the full model class $\Mcal$ to ensure that our
problem is regular,\footnote{
  That is, our model is locally fully nonparametric,
  i.e., the tangent space is $L_0^2(P_n^\infty)$,
  and our target parameter is pathwise differentiable, i.e., the conditions in
  Lemma~\ref{lemma:expension} are satisfied.
} and 
2) the submodel $\widetilde\Mcal^\infty$ so that the first-order optimality conditions hold. Proposition
\ref{prop:pnh0} highlights a key consequence of Assumption \ref{assump:S3}:

\begin{prop}\label{prop:pnh0} 
  Under Assumption \ref{assump:S3},
  the final KDPE estimate $P_n^\infty$ of the true distribution $P^*$
  satisfies the following estimating equations:
  \begin{align} \label{eq:ScoreEquations}
    \PP_n h =0 \quad \text{for all } h \in \calH_{P^\infty_n}.
  \end{align}
\end{prop}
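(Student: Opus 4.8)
The plan is to obtain \eqref{eq:ScoreEquations} as the first-order stationarity condition of the regularized likelihood step \eqref{prob:kMLE} evaluated at the fixed point $P_n^\infty$. By Assumption~\ref{assump:S3}, $p_n^\infty$ is the center of its own fluctuation submodel $\widetilde\calM^\infty=\widetilde\calM_{P_n^\infty}$, i.e. it corresponds to the perturbation $h=0$, and $h=0$ is a local minimizer, interior to the feasible set, of
\[
  F(h)\coloneqq \PP_n\big[-\log\big((1+h)p_n^\infty\big)\big]+\l\,\norm{h}^2_{\calH_{P_n^\infty}},\qquad h\in\calH_{P_n^\infty}.
\]

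First I would fix an arbitrary direction $g\in\calH_{P_n^\infty}$ and restrict $F$ to the line $t\mapsto F(tg)$. Functions in $\calH_{P_n^\infty}$ are uniformly bounded, $\norm{g}_\infty\le\big(\sup_{x}K_{P_n^\infty}(x,x)\big)^{1/2}\norm{g}_{\calH_{P_n^\infty}}<\infty$ by the reproducing property and the boundedness of $x\mapsto K_{P_n^\infty}(x,x)$ (Property~\ref{def:uk_cf}, Proposition~\ref{thm:mean-zero-kernel}); hence for $|t|$ small enough each $1+t\,g(O_i)>0$, the density $(1+tg)p_n^\infty$ is positive and integrates to one (using $P_n^\infty g=0$, which holds by definition of $\calH_{P_n^\infty}$), and, together with the interior clause of Assumption~\ref{assump:S3}, the perturbation $tg$ is feasible. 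Thus $t=0$ is an interior local minimum of the scalar map $t\mapsto F(tg)$, which is differentiable near $0$ (a finite sum of smooth terms $-\log(1+t\,g(O_i))$ plus the quadratic $\l t^2\norm{g}^2_{\calH_{P_n^\infty}}$), so $\frac{d}{dt}\big|_{t=0}F(tg)=0$.

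Next I would evaluate this derivative termwise: $\frac{d}{dt}\big|_{t=0}\big(-\log(1+t\,g(O_i))\big)=-g(O_i)$, so the likelihood part contributes $-\PP_n g$, while the penalty contributes $\frac{d}{dt}\big|_{t=0}\l t^2\norm{g}^2_{\calH_{P_n^\infty}}=0$. Hence $0=\frac{d}{dt}\big|_{t=0}F(tg)=-\PP_n g$, i.e. $\PP_n g=0$; since $g\in\calH_{P_n^\infty}$ was arbitrary, this is exactly \eqref{eq:ScoreEquations}. This also makes Remark~\ref{remark:regularization} precise: the penalty is quadratic in $h$ and therefore vanishes to first order at $h=0$, so regularization does not perturb the estimating equations, and the argument ranges over the entire RKHS $\calH_{P_n^\infty}$, not merely $S_n=\vspan\{k_{O_1},\dots,k_{O_n}\}$.

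The main obstacle is the passage from the geometric statement in Assumption~\ref{assump:S3} --- ``$h=0$ is an interior local minimum of $F$ over the infinite-dimensional set $\widetilde\calM_{P_n^\infty}$'' --- to the analytic conclusion ``the G\^ateaux derivative of $F$ at $0$ vanishes in every direction $g\in\calH_{P_n^\infty}$''. This requires checking feasibility of the segment $\{tg:|t|<\delta\}$ (positivity and unit mass of $(1+tg)p_n^\infty$, membership in $\calM$, and the interior part of Assumption~\ref{assump:S3}) and differentiability of the one-dimensional restriction, both of which reduce to the uniform boundedness of RKHS functions; once these are in place, the remaining computation is a single line.
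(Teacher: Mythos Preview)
Your proof is correct and reaches the same conclusion as the paper's, but by a somewhat different route. The paper decomposes $\calH_{P_n^\infty}$ into the finite-dimensional span $S_n=\vspan\{K_{P_n^\infty}(O_1,\cdot),\ldots,K_{P_n^\infty}(O_n,\cdot)\}$ and its orthogonal complement: on $S_n$ it computes the partial derivatives $\partial_{\a_j}$ of the finite-dimensional objective \eqref{eq:kMLEsol} at $\a=0$ to obtain $\PP_n K_{P_n^\infty}(O_j,\cdot)=0$ for each $j$ (hence for all linear combinations), and on $S_n^\perp$ it invokes Theorem~\ref{thm:representer} to note that any such $h$ satisfies $h(O_i)=0$, so $\PP_n h=0$ trivially. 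Your argument bypasses this decomposition entirely by taking the G\^ateaux derivative of $F$ at $h=0$ in an arbitrary direction $g\in\calH_{P_n^\infty}$, which is more economical and does not require the representer theorem as an input. The paper's approach has the small advantage of making the connection to the finite-dimensional optimization \eqref{eq:kMLEsol} explicit, while yours makes it clearer that the conclusion holds over the full RKHS and that the penalty, being quadratic, cannot disturb the first-order condition---exactly the point emphasized in Remark~\ref{remark:regularization}.
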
 

While Proposition \ref{prop:pnh0} states that KDPE solves infinitely many score
equations, it guarantees that the plug-in bias, 
$\PP_n \phi_{P^\infty_n}$,
is eliminated if the influence function $\phi_{P^\infty_n}$ falls in the RKHS $\Hcal_{P^\infty_n}$. By choosing a universal kernel \citep{carmeli2010vector, micchelli2006universal}, we strengthen the Proposition \ref{prop:pnh0} to achieve our
desired debiasing result, under the regularity conditions below.

\begin{assumption} \label{assump:S4}
  We assume that (i) RKHS $\calH_K$ is universal, (ii) the estimate $P^\infty_n$ satisfies 
  $p^*/p^\infty_n \in L^2(P_n^\infty)$, and (iii) the estimated influence function
  $\phi_{P^\infty_n}$ is in $L^2(P^*)$ $P^*$-a.s.
\end{assumption}

\begin{assumption}\label{assump:S5} 
There exist an event $\Omega$ with $P^*(\Omega)=1$,
and for every $\w\in\Omega$ there exists a sequence $h_j(\w)\in \calH_{P^\infty_n(\w)}$ indexed by $\w\in\Omega$ such that the following hold for all $w\in\Omega$: 
(i) 
$
\norm{h_j(\w) - \phi_{P^\infty_n(\w)}}_{L^2(P^*)}
\xrightarrow[j\to\infty]{} 0
$
and (ii) there exists a $j_0(w) \in \NN$, $j_0(w)<\infty$ such that the set of functions
$\Set{h_j(\w)}{\w\in\Omega, j > j_0(w)}$
is $P^*$-Donsker.
\end{assumption} 

\begin{theorem}[Asymptotic linearity and efficiency of KDPE] \label{thm:main} 
Let $\psi:\calM\to\mR$ be a pathwise-differentiable functional of the
distribution $P$ with influence function $\phi_P\in L^2_0(P)$ and von
Mises expansion:
$
  \psi(\bar P) - \psi(P) 
  =
  \int \phi_{\bar P} d(\bar P - P)
  + R_2(\bar P, P)
$
for any $\bar P,P\in\calM$ (which defines the second-order reminder term $R_2(\bar P, P)$.) Under Assumptions \ref{assump:empprocess} (i), \ref{assump:secondorderremainder},
\ref{assump:S3}, \ref{assump:S4}, \ref{assump:S5}, $\mP_n\phi_{P^\infty_n} = o_{P^*}(1/\sqrt{n})$
and the KDPE estimator satisfies
$
  \psi(P^\infty_n) - \psi(P^*)
  =
  \PP_n\phi_{P^*} + o_{P^*}(1/\sqrt{n}).
$
\end{theorem}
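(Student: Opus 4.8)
The plan is to start from the von Mises expansion of Lemma~\ref{lemma:expension} applied with $\h P = P^\infty_n$. Its hypotheses hold because, by Assumption~\ref{assump:S3}, $P^\infty_n$ lies in the interior of $\calM$ with $\psi$ pathwise differentiable there, and by Assumption~\ref{assump:S4}(ii) we have $p^*/p^\infty_n\in L^2(P^\infty_n)$ (so $P^*$ is absolutely continuous w.r.t.\ $P^\infty_n$ with square-integrable density ratio). This gives
\[
  \psi(P^\infty_n)-\psi(P^*)
  =
  \PP_n\phi_{P^*}
  -\PP_n\phi_{P^\infty_n}
  +(\PP_n-P^*)\big[\phi_{P^\infty_n}-\phi_{P^*}\big]
  +R_2(P^\infty_n,P^*),
\]
so it suffices to show that the last three terms are $o_{P^*}(1/\sqrt n)$. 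The remainder $R_2(P^\infty_n,P^*)$ is $o_{P^*}(1/\sqrt n)$ directly by Assumption~\ref{assump:secondorderremainder}, and (as will be seen) the plug-in bias claim $\PP_n\phi_{P^\infty_n}=o_{P^*}(1/\sqrt n)$ comes out of the same estimates.

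The core of the argument is to replace the data-dependent influence function $\phi_{P^\infty_n}$ by the RKHS approximants supplied by Assumption~\ref{assump:S5}, exploiting that Proposition~\ref{prop:pnh0} (a consequence of Assumption~\ref{assump:S3}) gives the score equations $\PP_n h=0$ for \emph{every} $h\in\calH_{P^\infty_n}$, and that by universality of $K$ (Assumption~\ref{assump:S4}(i)) together with Proposition~\ref{thm:mean-zero-kernel} this RKHS is rich enough to contain such a sequence $h_j$ with $\norm{h_j-\phi_{P^\infty_n}}_{L^2(P^*)}\to 0$. Fix $\w$ in the intersection of the probability-one events of Assumptions~\ref{assump:empprocess}, \ref{assump:S3}, \ref{assump:S5}, and for any $j>j_0(\w)$ subtract $\PP_n h_j=0$ inside the bias terms; adding and subtracting $h_j$ and regrouping the empirical-process contributions collapses the two bias terms to
\[
  -\PP_n\phi_{P^\infty_n}+(\PP_n-P^*)\big[\phi_{P^\infty_n}-\phi_{P^*}\big]
  =
  (\PP_n-P^*)\big[h_j-\phi_{P^*}\big]
  +
  P^*\big[h_j-\phi_{P^\infty_n}\big].
\]
The point of this identity is that the empirical process now acts only on $h_j$ and the fixed function $\phi_{P^*}$ — both members of the $P^*$-Donsker class of Assumption~\ref{assump:S5}(ii), enlarged by the single function $\phi_{P^*}$, which remains Donsker — while the leftover is a pure $L^2(P^*)$ approximation error.

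To finish, I would run a diagonal argument in $j$: for each $n$ choose $j_n=j_n(\w)>j_0(\w)$ with $\norm{h_{j_n}-\phi_{P^\infty_n}}_{L^2(P^*)}\le 1/n$ (possible by Assumption~\ref{assump:S5}(i)). By Cauchy--Schwarz, $\big|P^*[h_{j_n}-\phi_{P^\infty_n}]\big|\le 1/n=o(1/\sqrt n)$. For the empirical-process term, the triangle inequality and Assumption~\ref{assump:empprocess}(i) give $\norm{h_{j_n}-\phi_{P^*}}_{L^2(P^*)}\le 1/n+\norm{\phi_{P^\infty_n}-\phi_{P^*}}_{L^2(P^*)}=o_{P^*}(1)$, so asymptotic equicontinuity of the empirical process over the (closure of the) Donsker class yields $\sqrt n\,(\PP_n-P^*)[h_{j_n}-\phi_{P^*}]=o_{P^*}(1)$. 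Combining with $R_2=o_{P^*}(1/\sqrt n)$ gives $\psi(P^\infty_n)-\psi(P^*)=\PP_n\phi_{P^*}+o_{P^*}(1/\sqrt n)$; feeding this back into the von Mises expansion isolates $\PP_n\phi_{P^\infty_n}=o_{P^*}(1/\sqrt n)$, and asymptotic linearity with influence function $\phi_{P^*}$ yields the limit $N(0,P^*[\phi_{P^*}^2])$, which is the efficiency bound since in a nonparametric model the influence function is the efficient one.

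\textbf{Main obstacle.}
The delicate step is controlling the empirical process applied to the random function $\phi_{P^\infty_n}$ with no rate on $\norm{\phi_{P^\infty_n}-\phi_{P^*}}_{L^2(P^*)}$ (note that Assumption~\ref{assump:empprocess}(ii) is deliberately \emph{not} assumed, and a Chebyshev bound is unavailable because the function is data-dependent). The resolution is exactly the rearrangement above: the score equations of Proposition~\ref{prop:pnh0} let us trade $\phi_{P^\infty_n}$ for its RKHS approximants $h_j$, which by Assumption~\ref{assump:S5} sit in a fixed Donsker class, and the diagonal choice $j_n$ must be fine enough ($1/n\ll 1/\sqrt n$) while still letting Donsker equicontinuity apply. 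A secondary point to state carefully is that the $L^2(P^*)$-closure of a Donsker class is Donsker, so that $\phi_{P^\infty_n}$ (and $\phi_{P^*}$) may be treated as elements of it.
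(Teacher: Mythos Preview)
Your proposal is correct and follows essentially the same route as the paper: von Mises expansion, substitution of $\PP_n h_j=0$ from Proposition~\ref{prop:pnh0}, rearrangement into an empirical-process term over the Donsker class of Assumption~\ref{assump:S5} plus an $L^2(P^*)$ residual, and a diagonal choice of $j_n$. Your single-term decomposition $(\PP_n-P^*)[h_{j}-\phi_{P^*}]+P^*[h_j-\phi_{P^\infty_n}]$ is in fact a slight streamlining of the paper's split into terms $(c)$ and $(d)$, and your explicit remark that the $L^2(P^*)$-closure of a Donsker class is Donsker is exactly what is needed to justify the separate claim $\PP_n\phi_{P^\infty_n}=o_{P^*}(1/\sqrt n)$ (and the paper's control of its term $(d)$) under only Assumption~\ref{assump:empprocess}(i).
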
 
 
Theorem \ref{thm:main} shows
that KDPE attains asymptotic efficiency by converging to the optimal limiting distribution within the class of RAL estimators. 
To converge to this asymptotically optimal distribution, error terms that deviate from this limit distribution must be on the scale of $o(1/\sqrt{n})$. Thus, the definition of asymptotic efficiency (\citealp{van2000asymptotic}, Section 8) also implies that KDPE achieves local asymptotic minimax optimality with respect to the error $\psi(P_n^\infty) - \psi(P^*)$ on the order of $O(1/\sqrt{n})$. The proof of Theorem~\ref{thm:main} (Appendix~\ref{proof:thm_main}) 
%
requires that as $n$ approaches infinity, the $\sqrt{n}$-scaled error between our true value and KDPE estimate converges to the asymptotically normal distribution implied by $\PP_n(\phi_{P^*})$.
The proof consists of three steps: Step 1) 
our uniform approximation guarantees are with respect to the true distribution $P^*$ (as opposed to the KDPE distribution ${P}_n^\infty$), Step 2) decompose the error of the KDPE estimate similar to that of Equation \eqref{eq:asyExpansion}, and Step 3) showing that the empirical process term is of order $o_{P^*}(1/\sqrt{n})$.

\begin{remark}
Assumption \ref{assump:S5} is our novel regularity assumption on the model $\mathcal{M}$ and functional $\psi$ needed to control the plug-in bias term $\PP_n \phi_{P_n^\infty}$ of the proposed KDPE estimator.
Assumptions \ref{assump:empprocess}-\ref{assump:secondorderremainder} and conclusions derived from them are used for the one-step correction and TMLE \cite{van2006targeted, kennedy2023}, and analogous conditions to Assumptions \ref{assump:S3} and \ref{assump:S4} are common for TMLE \cite{van2006targeted}. This is \emph{intentional}:
considering a similar set of assumptions
directly characterizes the additional assumption (Assumption \ref{assump:S5}) needed for KDPE and its practical advantages. Our novel assumption, Assumption \ref{assump:S5}, controls an additional, nonstandard empirical process term that arises in the analysis of the KDPE estimator, as shown in Appendix \ref{append:DGP2}. One trivial condition for Assumption \ref{assump:S5} is that $\phi_{P_n^\infty} \in \mathcal{H}_{P_n^\infty}$ for all sample paths. For estimation problems with nuisances (e.g., densities, conditional regression functions, etc.) over a compact, continuous input space $\mathcal{O}$, sufficient conditions for Assumption \ref{assump:S5} are (i) the choice of the universal kernel (e.g., Gaussian Kernel) and (ii) bounded total variation of the approximating sequence $\{h_j(\omega)\}_{j \in \NN}$ for all $\omega \in \Omega$ (see \citealp{dudley_bv}). While the universal approximation property for our choice of kernel is well-documented, our work is the first to apply this property to reduce first-order bias in nonparametric estimation.
\end{remark}

\section{Implementation/Simulation Results} 
\label{sec:sims}
In this section, we 
provide a detailed implementation of KDPE, and evaluate it 
using two simulated DGPs that occur in the causal inference literature, 
validating the results provided by Theorem \ref{thm:main}. 
We note that these models are semi-parametric due to conditional independence constraints. In Appendix \ref{app:simulation}, we provide the necessary modifications to Algorithm \ref{alg:kdpe_main} for our updated distributions to be valid within the model $\Mcal$.
Additional details and results (e.g.,  bootstrap confidence intervals, computational complexity, update mechanism) are included in Appendix \ref{app:simulation}.\footnote{Code: \url{https://github.com/brianc0413/KDPE}.}

\paragraph{Modification for KDPE} To use
solvers, we work with a slightly modified algorithm, Algorithm~\ref{alg:kdpe_main}. The main modifications from the baseline KDPE include the incorporation of a density bound parameter, $c$, and a convergence tolerance function. The constant $c$ avoids violations of condition (ii) of Assumption \ref{assump:S4} (square integrability of $p^*/p_n^\infty$) and enables standard convex/concave optimization software for finding $\alpha^{i+1}$. The convergence tolerance function, $l(\cdot, \cdot)$ acts as a metric tracking the net change in distribution between iterations.\footnote{Because the objective function $L(\cdot)= -\PP_n\log(\cdot)$ is globally convex, when $l(P_n^{i+1}, P_{n}^{i})$ is small, we are close to a solution satisfying the necessary first-order conditions.} For all simulations, we use the convergence tolerance function $l(P,P') = \|p-p'\|_{L_2(\eta)}$, where $\eta$ is the counting measure.
\begin{algorithm}[ht!] 
  \caption{KDPE implementation} \label{alg:kdpe_main}
  \begin{algorithmic}[1] 
  \STATE 
  \textbf{input}: data 
  $\{O_i\}_{i=1}^n$, convergence tolerance
  function $l(\cdot)$, convergence tolerance $\gamma$, density bound  $c$, regularization parameter $\lambda$, pre-estimate $P^0_n$, PD kernel $K$.
  \STATE \textbf{initialize}: convergence tracker $\beta(0) = \infty$ and $\ell=0$.
  \WHILE{$\beta(\ell)  > \gamma$}
      \STATE Construct the mean-zero kernel $K^\ell \coloneqq K_{P_n^\ell}$.  
      \STATE Update the distribution according to the log-likelihood loss function:
      \STATE Set $\alpha_n^\ell$:
	\begin{align*}
	    \quad\quad\mbf{\alpha}^{\ell}_{n} 
	=
	\argmin_{\substack{\alpha \in \RR^n,\\\ c \leq {p}_n^\ell(\alpha^\top k_{\mbf{O}}^\ell) \leq 1-c}}&
	  -\PP_n \log([1+\alpha^\top k_{\mbf{O}}^\ell]p_n^\ell)
   \\ 
        & 
        + \lambda\alpha^\top K_{\mbf{O}}^\ell\alpha
	\end{align*}
	
      \STATE set ${p}_{n}^{\ell+1} = p_n^\ell([\mbf\alpha_n^\ell]^\top k_{\mbf{O}}^\ell) = [1+[\mbf{\alpha}_n^\ell]^\top k_{\mbf{O}}^\ell]p_{n}^\ell$
      \STATE set $\beta(\ell +1) = l(P_n^{\ell+1}, {P}_n^{\ell})$,
	and $\ell = \ell+1$.
  \ENDWHILE
  \STATE \textbf{output}: $\h{P}_{\tsf{KDPE}} = {P}_n^{\ell}$.
\label{alg:modified_KDPE}
  \end{algorithmic}    
\end{algorithm}

\subsection{DGPs and Target Parameters}
We consider two common models in causal inference literature and demonstrate how to use KDPE to debias plug-in estimators for the desired target parameters. For all DGPs, we make the standard
assumptions of 
positivity, conditional ignorability, and the stable unit treatment value assumption (SUTVA),
which enable the identification of our target parameters with a causal interpretation.

\paragraph{DGP1: Observational Study}
We define $\Ocal \equiv \Xcal\times \Acal\times \Ycal$, with baseline covariate $X \in \Xcal \equiv [0,1]$, binary treatment indicator $A \in \Acal \equiv \{0,1\}$, and a binary outcome variable $Y \in \Ycal \equiv \{0,1\}$.  The DGP can be decomposed as follows:
$$P \equiv P_X \times P_{A|X} \times P_{Y|A,X}.$$
We place no restrictions on $P_{Y|A,X}$ other than being a valid conditional density function (i.e., nonparametric). 
The true DGP in our simulation is given below:
\begin{align*}
    & X\sim \text{Unif}[0,1], \quad A|X \sim  \text{Bern}\left(0.5+\frac{1}{3}\sin\left({50X}/{\pi}\right)\right),\\ 
    &Y|A,X \sim \text{Bern}\left(0.4 + A(X-0.3)^2 + \frac{1}{4}\sin({40X}/{\pi})\right).
\end{align*}

\paragraph{DGP2: Longitudinal Observational Study}
A more complicated DGP represents a two-stage study. We define $\Ocal \equiv \Xcal\times \mathcal{A}_0 \times \mathcal{L}_1 \times \mathcal{A}_0\times \Ycal$, with baseline covariate $X \in \RR$, binary treatment indicators $A_0, A_1$ (where $A_i$ is the treatment at time $i$), and binary outcome variables $L_1, Y$. The time ordering of the variables is given by:
$W \rightarrow A_0 \rightarrow L_1 \rightarrow A_1 \rightarrow Y$.
As before, the DGP can be decomposed as follows:
$$P \equiv P_X \times P_{A_0|X} \times P_{L_1|A_0,X} \times P_{A_1|L_1,A_0,X} \times P_{Y|A_1,L_1,A_0,X}$$
which have the following distribution:
\begin{align*}
\centering
     &X \sim \text{Unif}[0,8],\quad  A_0 \sim \text{Bern}(0.5)\\
    &L_1 = \mbf{1}[3+A_0-0.75X + \epsilon_1 > 0]\\
    &A_1 = \mbf{1}[\text{expit}(-3 + 0.5X + 0.4*L1)\geq 0.3]\\
    &Y = \mbf{1}[\text{expit}(X- 3.5 \\
    &\quad\quad\quad-0.3A_0 - 0.5L_1 -0.5A_1 + \epsilon_2)\geq 0.5],
\end{align*}
where $\epsilon_1,\epsilon_2$ are i.i.d. noise with distribution $N(0,1)$. 

\paragraph{Target Parameters}
The target parameters that we consider here are functions of the mean potential outcome under a specific treatment policy. For DGP1, we denote the mean potential outcome as:
$$ \mu_a(P) \equiv P_X[P_{Y|A,X}[Y|A=a, X=x]].$$
For DGP2, we consider the mean potential outcome for a fixed treatment across all time points ($A_1 = A_0 = a$):\begin{align*}
    \mu_a(P) &\equiv P_X\big[P_{L_1|A=a,X}[P_{Y|A_1,L_1,A_0,X}[\\ &\quad\quad\quad Y|A_1=a, L_1 = l_1, A_0=a, X=x]]\big].
\end{align*}
The target parameters considered in this study are:
\begin{itemize}
    \item avg. treatment effect, $\psi_{\textsf{ATE}}(P) = \mu_1(P) - \mu_0(P)$,
    \item relative risk, $\psi_{\textsf{RR}}(P) = \mu_1(P)/ \mu_0(P)$,
    \item odds ratio, $ \psi_{\textsf{OR}}(P) = \frac{\mu_1(P)/(1-\mu_1(P))}{\mu_0(P)/(1-\mu_0(P))}.$
\end{itemize}

\subsection{Simulation Setup}
\paragraph{Pre-estimate Initialization} For both DGPs, we initialize the distribution of baseline covariate $X$ as $P_n^0(X) = \PP_n(X)$ and use the package \texttt{SuperLearner} \citep{sl2007} to estimate the remaining conditional distributions, obtaining $P_n^0$. For our testing of highly adaptive lasso (HAL, \citealp{vanderlaan2017hal}), we use the implementation of HAL provided by package \texttt{hal9001} \cite{coyle2022hal9001-rpkg}.\footnote{We defer additional details on hyperparameter selection and tuning to Appendix \ref{app:simulation}.}

\paragraph{Baseline Methods for Comparison} We compute these estimators with four different plug-in methods: 1) our proposed method KDPE,  2) TMLE, 3) HAL, and 4) a baseline using the biased super-learned distribution, SL \citep{sl2007}, which is our pre-estimate. 
TMLE has access to the functional form of the IF of each target parameter and is known for its efficient limiting distribution.
TMLE 
serves as a benchmark 
for examining the asymptotic efficiency of KDPE.
We expect KDPE to perform no worse than TMLE. HAL serves as a benchmark for the empirical performance of KDPE against an existing method with similar aims: HAL also solves a rich class of score / estimating equations and can be used as a debiased plug-in for many different target parameters.
In contrast, comparisons against SL illustrate the effect of the KDPE approach 
on the distribution of our estimates,  highlighting the need for using a debiasing approach.
For DGP2, we use the longitudinal version of TMLE, denoted as LTMLE \citep{ltmle} as our baseline.  For (L)TMLE, we learn a distinct data-generating distribution for each target parameter, i.e.,
$\h{P}_{\text{TMLE}}^{\text{RR}}, \h{P}_{\text{TMLE}}^{\text{OR}},
\h{P}_{\text{TMLE}}^{\text{ATE}}$, to debias the estimate. KDPE, HAL, and SL only use a single estimated distribution $\h{P}_{\text{SL}}$ and $\h{P}_{\text{KDPE}}$, respectively.

\paragraph{Simulation Settings}
In all experiments, we use 300 samples
($n=300$) and the mean-zero Gaussian kernel. For both DGP1 and DGP2, we set
$c=0.001$. For DGP1, we use 450  simulations, with $\lambda =0, \gamma = 0.002$. For DGP2, we use 350 simulations with $\lambda = 15$, $\gamma = 0.0001$.\footnote{Because larger values of $\lambda$ result in smaller changes, we recommend setting smaller $\gamma$ for larger $\lambda$.} The choice of hyperparameters is discussed in Appendix \ref{app:simulation}.



\subsection{Comparing Empirical Performance}
\label{subsec:emp_dist}
\begin{figure*}[ht!]
    \centering
    \includegraphics[width=0.91\linewidth]{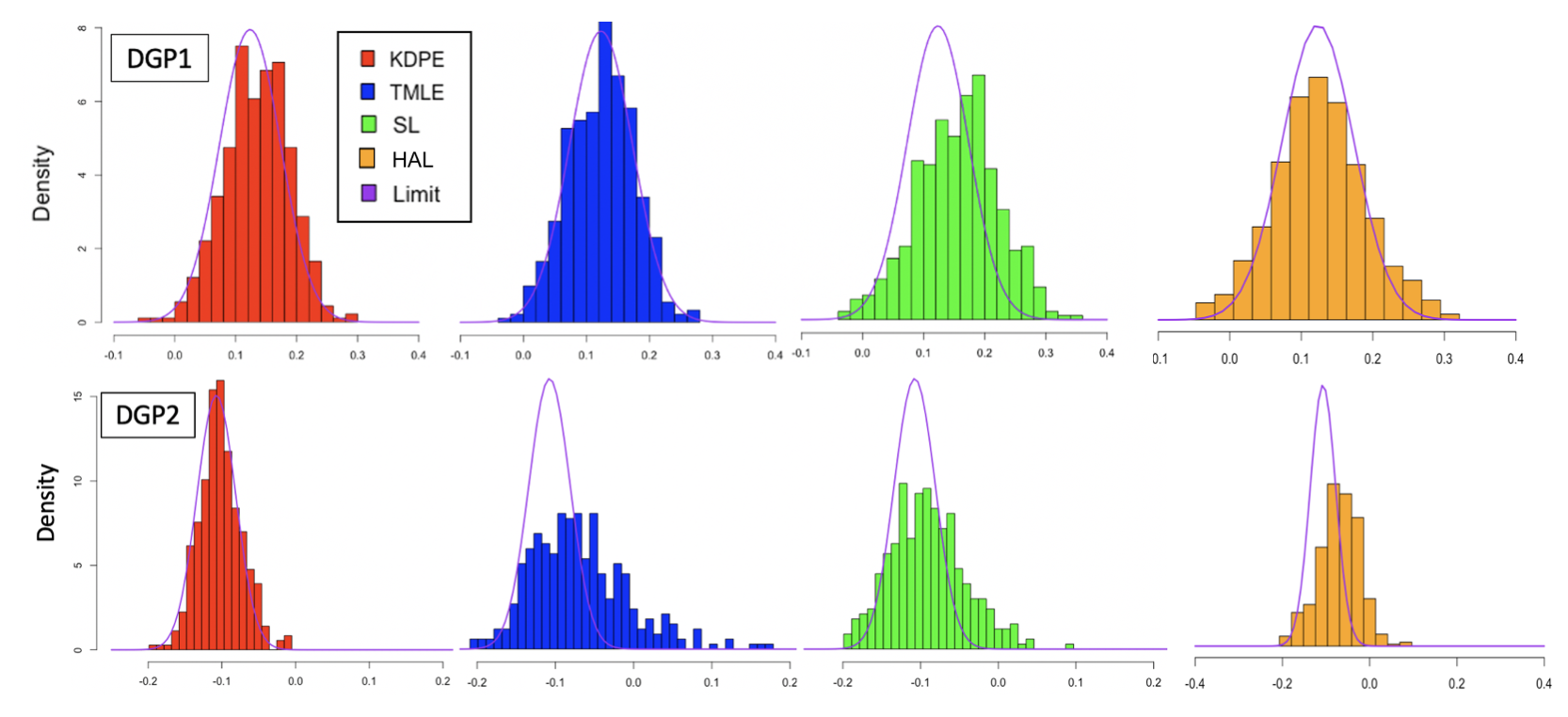}
    \caption{Simulated distributions of $\h \psi_{\mathrm{ATE}}$ compared to their asymptotic distributions. TMLE distribution in the second row corresponds to LTMLE for DGP2.}
    \label{fig:limitdists}
\end{figure*}

\begin{table}[t!]
\centering
\begin{tabular}{rlrrr}
  \hline
  &Method  & $\psi_{\text{ATE}}$ & $\psi_{\text{RR}}$ & $\psi_{\text{OR}}$ \\ 
  \hline
  &SL & 0.0803 & 0.2623 & 0.6796 \\ 
DGP1  &TMLE & 0.0574 & 0.1723 & 0.4059 \\ 
   &KDPE & 0.0592 & 0.1752 & 0.4303 \\ 
   & HAL & 0.0635 & 0.1987 & 0.4795 \\
   \hline
     &SL & 0.0508 & 0.0925 & 0.1555 \\ 
DGP2  &LTMLE & 0.0731 & 0.1481 & 0.2648 \\ 
   &KDPE & 0.0295 & 0.0778 & 0.0827 \\ 
   &HAL & 0.0589 & 0.1084 & 0.1848\\
   \hline
\end{tabular}
\caption{Root Mean Squared Error (RMSE) of KDPE, (L)TMLE, HAL-MLE, and SL for DGP1, DGP2.}
\label{table:RMSE}
\end{table}
\paragraph{DGP1 verifies the results of Theorem \ref{thm:main}} 
Our simulations for DGP1 (Figures~\ref{fig:limitdists} and \ref{figure: asymp_normal}) demonstrate two distinct results of Theorem~\ref{thm:main}: (i) KDPE is roughly equidistant to the efficient limiting normal distribution as TMLE, and (ii) KDPE works well as a plug-in distribution for many pathwise differentiable parameters. 

As an example, the limiting distribution of $\psi_{\mathrm{ATE}}(P^*)$ is given by $N(0.1233, $ $0.0025)$ for DGP1. For DGP1, the distribution of $\psi_{\mathrm{ATE}}(\h{P}_{\text{SL}})$ in Figure \ref{fig:limitdists} (and $\psi_{\mathrm{RR}}(\h{P}_{\mathrm{SL}}), \psi_{\mathrm{OR}}(\h{P}_{\text{SL}})$, shown in Figure \ref{figure: asymp_normal} of Appendix \ref{app:simulation}) 
highlight the necessity of debiasing. Without correction,
the naive plug-in estimator deviates significantly from the limiting distribution, while all three debiasing approaches, $\psi_{\ATE}(\h{P}_{\KDPE})$, $\psi_{\ATE}(\h{P}_{\text{HAL}})$ and $\psi_{\ATE}(\h{P}_{\TMLE}^{\ATE})$, achieve similar distributions to the limiting normal distribution. Despite no knowledge of the influence function when performing debiasing, the simulated distribution of $\psi_{\ATE}(\h{P}_{\KDPE})$ is as close to the efficient limiting distribution as $\psi_{\ATE}(\h{P}_{\TMLE}^{\ATE})$, and closer to the efficient limiting distribution than $\psi_\ATE(\h{P}_{\text{HAL}})$. The root-mean-squared error results provided in Table \ref{table:RMSE} corroborate this result, where the RMSE of KDPE is closer to that of TMLE than HAL. 

The results for DGP1 in Figure \ref{figure: asymp_normal} verifies the second consequence of Theorem \ref{thm:main}: for all pathwise differentiable parameters that satisfy the stated regularity conditions, $\h{P}_{\KDPE}$ is debiased as a plug-in estimator. Across all three target estimands, the KDPE estimate $\psi(\h{P}_{\KDPE})$ performs similarly (both in terms of similarity to the limiting distribution and RMSE) compared with TMLE, which uses a distinct targeted plug-in ($\h{P}_{\TMLE}^{\ATE}, \h{P}_{\TMLE}^{\RRm}, \h{P}_{\TMLE}^{\ORr}$) for each parameter.

\paragraph{DGP2 demonstrates that KDPE may provide improved finite sample performance.} DGP2 provides an example where KDPE vastly outperforms TMLE and HAL for fixed sample size $n=300$, despite TMLE using the influence functions for each estimand. KDPE significantly outperforms TMLE, HAL, and the naive SL plug-in estimates both in terms of RMSE, as shown in Table \ref{table:RMSE} and convergence to the asymptotic limiting distribution, as shown in Figure \ref{fig:limitdists} for $\psi_\ATE$. The resulting plug-in estimates from KDPE demonstrate much smaller variance (Figures \ref{fig:limitdists} and \ref{figure: asymp_normal}) than SL, HAL, or TMLE for all target parameters, and is centered correctly for $\psi_{\mathrm{ATE}}, \psi_{\mathrm{OR}}$. The performance of KDPE for $n=300$ indicates that for smaller sample sizes, KDPE may outperform a more targeted approach; in contrast, the IF-based approach, LTMLE, requires a much larger sample size ($n \approx 800$) to reach the correct limiting distribution.

\section{Conclusions and Practical Benefits}
In this paper, we introduce KDPE, a new method for debiasing plug-in estimators that (i) does not require the IF as input, and (ii) works as a general plug-in for any target parameters that satisfy our regularity conditions. 

To see the practical benefits of KDPE, let us compare the required work from the analyst (1) using existing IF-based methods, (2) using existing IF-free methods, and (3) using KDPE. An analyst may initially want to estimate $k$ parameters from the distribution, $\{\psi_i(P^*)\}_{i=1}^k$ from the data. Using IF-based methods (DML, TMLE, one-step), the analyst is required to derive (or find) the IF for each target parameter $\psi_i$, and uses the IF to debias each parameter, resulting in $k$ analytical derivations and $k$ runs of the debiasing algorithm. Using finite-differencing approaches, 
the analyst approximates 
the plug-in bias term $\frac{1}{n}\sum_{i=1}^n \phi_{\hat{P}}(O_i)$, and uses the one-step correction,
resulting in $k$ runs of the finite differencing algorithm. Using KDPE, the analyst simply obtains a distributional estimate $\hat{P}$ from running KDPE \textbf{once}, and uses this as a plug-in estimator for the $k$ parameters, resulting in estimates $[\psi_i(\hat{P})]_{i=1}^k$.  After estimating the $k$ parameters, the analyst wishes to do a follow-up analysis by estimating an additional $w$ parameters, indexed $[\psi_i(P^*)]_{i=k+1}^{k+w}$. 
Using IF-based or finite-differencing approaches, the analyst must derive or approximate the IF $w$ additional times, and perform $w$ additional runs of their debiasing algorithm. Using KDPE, we simply use the same $\hat{P}$ obtained previously as a plug-in to obtain estimates $[\psi_i(\hat{P})]_{i=k+1}^{k+w}$. 

This example demonstrates the benefits of KDPE; it (1) avoids any computation/derivation of the IF for all $k+w$ parameters, and (2) obtains a single debiased plug-in distribution $\hat{P}$ that can be used to obtain debiasing for many target parameters. In contrast with existing approaches, KDPE only needs to be run once to obtain a general plug-in distribution for all $k+w$ parameters, and greatly simplifies the estimation process for many different parameters of interest.

Future work includes
(i) analyzing the effect of RKHS-norm regularization on the convergence and efficiency, (ii) finding simple heuristics for setting hyperparameters $c, \gamma, \lambda$, 
(iii) investigating how to apply sample-splitting and whether it can relax our assumptions, and (iv) verifying the validity of bootstrap.
\section*{Acknowledgements}
This work was supported by the NDSEG fellowship and Harvard Data Science Initiative Fellowship. Additionally, this research was completed while Kyra Gan was a postdoc at Harvard, and was supported by NIH grants P41EB028242 and P50DA054039. The authors would like to thank Susan Murphy for her early feedback.
\section*{Impact Statement}
This paper presents work whose goal is to advance the field of Machine Learning. There are many potential societal consequences of our work, none which we feel must be specifically highlighted here.

\bibliography{citation}

\begin{thebibliography}{}

\bibitem[Agrawal et~al., 2024]{agrawal2024automated}
Agrawal, R., Witty, S., Zane, A., and Bingham, E. (2024).
\newblock Automated efficient estimation using monte carlo efficient influence functions.
\newblock {\em arXiv preprint arXiv:2403.00158}.

\bibitem[Bickel et~al., 1998]{bickel998}
Bickel, P.~J., Klaassen, C., Ritov, Y., and Wellner, J. (1998).
\newblock {\em Efficient and adptive estimation for semiparametric models}.
\newblock Springer.

\bibitem[Carmeli et~al., 2010]{carmeli2010vector}
Carmeli, C., De~Vito, E., Toigo, A., and Umanit{\'a}, V. (2010).
\newblock Vector valued reproducing kernel hilbert spaces and universality.
\newblock {\em Analysis and Applications}, 8(01):19--61.

\bibitem[Carone et~al., 2018]{carone2018toward}
Carone, M., Luedtke, A.~R., and van~der Laan, M.~J. (2018).
\newblock Toward computerized efficient estimation in infinite-dimensional models.
\newblock {\em Journal of the American Statistical Association}.

\bibitem[Chernozhukov et~al., 2017]{chernozhukov2017double}
Chernozhukov, V., Chetverikov, D., Demirer, M., Duflo, E., Hansen, C., and Newey, W. (2017).
\newblock Double/debiased/neyman machine learning of treatment effects.
\newblock {\em American Economic Review}, 107(5):261--265.

\bibitem[Chernozhukov et~al., 2022]{chernozhukov2022automatic}
Chernozhukov, V., Newey, W.~K., and Singh, R. (2022).
\newblock Automatic debiased machine learning of causal and structural effects.
\newblock {\em Econometrica}, 90(3):967--1027.

\bibitem[Coyle et~al., 2022]{coyle2022hal9001-rpkg}
Coyle, J.~R., Hejazi, N.~S., Phillips, R.~V., {van der Laan}, L., and {van der Laan}, M.~J. (2022).
\newblock {hal9001}: The scalable highly adaptive lasso.
\newblock {R} package version 0.4.2.

\bibitem[Dudley, 1992]{dudley_bv}
Dudley, R.~M. (1992).
\newblock Frechet differentiability, p-variation and uniform donsker classes.
\newblock {\em The Annals of Probability}, 20(4):1968--1982.

\bibitem[Fu et~al., 2020]{cvxr2020}
Fu, A., Narasimhan, B., and Boyd, S. (2020).
\newblock {CVXR}: An {R} package for disciplined convex optimization.
\newblock {\em Journal of Statistical Software}, 94(14):1--34.

\bibitem[Fukumizu, 2009]{fukumizu2009}
Fukumizu, K. (2009).
\newblock Exponential manifold by reproducing kernel hilbert spaces.
\newblock {\em Algebraic and Geometric Methods in Statistics}, page 291–306.

\bibitem[Gruber and van~der Laan, 2012]{tmle_package}
Gruber, S. and van~der Laan, M.~J. (2012).
\newblock {tmle}: An {R} package for targeted maximum likelihood estimation.
\newblock {\em Journal of Statistical Software}, 51(13):1--35.

\bibitem[Hines et~al., 2022]{hines2022}
Hines, O., Dukes, O., Diaz-Ordaz, K., and Vansteelandt, S. (2022).
\newblock Demystifying statistical learning based on efficient influence functions.
\newblock {\em The American Statistician}, 76(3):292--304.

\bibitem[Jordan et~al., 2022]{jordan2022empirical}
Jordan, M., Wang, Y., and Zhou, A. (2022).
\newblock Empirical gateaux derivatives for causal inference.
\newblock {\em Advances in Neural Information Processing Systems}, 35:8512--8525.

\bibitem[Kennedy, 2022]{kennedy2023}
Kennedy, E.~H. (2022).
\newblock Semiparametric doubly robust targeted double machine learning: a review.
\newblock {\em arXiv preprint arXiv:2203.06469}.

\bibitem[Koshevnik and Levit, 1976]{koshevnik1976non}
Koshevnik, Y.~A. and Levit, B.~Y. (1976).
\newblock On a non-parametric analogue of the information matrix.
\newblock {\em Teoriya Veroyatnostei i ee Primeneniya}, 21(4):759--774.

\bibitem[Kotz and Johnson, 1991]{boostrap_ref}
Kotz, S. and Johnson, N. (1991).
\newblock {\em Breakthroughs in Statistics}.
\newblock Number v. 1-2 in Breakthroughs in Statistics. Springer.

\bibitem[Künzel et~al., 2019]{K_nzel_2019}
Künzel, S.~R., Sekhon, J.~S., Bickel, P.~J., and Yu, B. (2019).
\newblock Metalearners for estimating heterogeneous treatment effects using machine learning.
\newblock {\em Proceedings of the National Academy of Sciences}, 116(10):4156--4165.

\bibitem[Micchelli et~al., 2006]{micchelli2006universal}
Micchelli, C.~A., Xu, Y., and Zhang, H. (2006).
\newblock Universal kernels.
\newblock {\em Journal of Machine Learning Research}, 7(12).

\bibitem[Newey, 1990]{newey1990}
Newey, W.~K. (1990).
\newblock Semiparametric efficiency bounds.
\newblock {\em Journal of Applied Econometrics}, 5(2):99--135.

\bibitem[Robins, 1986]{robins1986}
Robins, J. (1986).
\newblock A new approach to causal inference in mortality studies with a sustained exposure period—application to control of the healthy worker survivor effect.
\newblock {\em Mathematical Modelling}, 7(9):1393--1512.

\bibitem[Smale and Zhou, 2007]{smale2007learning}
Smale, S. and Zhou, D.-X. (2007).
\newblock Learning theory estimates via integral operators and their approximations.
\newblock {\em Constructive Approximation}, 26(2):153--172.

\bibitem[van~der Laan, 2017]{vanderlaan2017hal}
van~der Laan, M. (2017).
\newblock {{A} {G}enerally {E}fficient {T}argeted {M}inimum {L}oss {B}ased {E}stimator based on the {H}ighly {A}daptive {L}asso}.
\newblock {\em The International Journal of Biostatistics}, 13(2).

\bibitem[van~der Laan and Gruber, 2016]{one_step_tmle}
van~der Laan, M. and Gruber, S. (2016).
\newblock One-step targeted minimum loss-based estimation based on universal least favorable one-dimensional submodels.
\newblock {\em The International Journal of Biostatistics}, 12:351--378.

\bibitem[van~der Laan et~al., 2007]{sl2007}
van~der Laan, M., Polley, E., and Hubbard, A. (2007).
\newblock Super learner.
\newblock Technical Report Working Paper 222., U.C. Berkeley Division of Biostatistics Working Paper Series.

\bibitem[van~der Laan and Rose, 2018]{book2018}
van~der Laan, M. and Rose, S. (2018).
\newblock {\em Targeted Learning in Data Science: Causal Inference for Complex Longitudinal Studies}.
\newblock Springer Science and Business Media.

\bibitem[van~der Laan et~al., 2022]{laan_undersmooth2022}
van~der Laan, M.~J., Benkeser, D., and Cai, W. (2022).
\newblock {Efficient estimation of pathwise differentiable target parameters with the undersmoothed highly adaptive lasso}.
\newblock {\em The International Journal of Biostatistics}.

\bibitem[van~der Laan and Gruber, 2012]{ltmle}
van~der Laan, M.~J. and Gruber, S. (2012).
\newblock Targeted minimum loss based estimation of causal effects of multiple time point interventions.
\newblock {\em The International Journal of Biostatistics}, 8(1).

\bibitem[van~der Laan and Rubin, 2006]{van2006targeted}
van~der Laan, M.~J. and Rubin, D. (2006).
\newblock Targeted maximum likelihood learning.
\newblock {\em The International Journal of Biostatistics}, 2(1).

\bibitem[van~der Vaart, 1991]{van1991differentiable}
van~der Vaart, A. (1991).
\newblock On differentiable functionals.
\newblock {\em The Annals of Statistics}, pages 178--204.

\bibitem[van~der Vaart, 2000]{van2000asymptotic}
van~der Vaart, A. (2000).
\newblock {\em Asymptotic statistics}, volume~3.
\newblock Cambridge university press.

\bibitem[Zheng and van~der Laan, 2010]{zheng2010}
Zheng, W. and van~der Laan, M. (2010).
\newblock Asymptotic theory for cross-validated targeted maximum likelihood estimation.
\newblock Technical Report Working Paper 273., U.C. Berkeley Division of Biostatistics Working Paper Series.

\end{thebibliography}
\appendix

\thispagestyle{empty}

\onecolumn 
\section*{Appendix}
\section{Notation} \label{app:notation}

\begin{description}[
  leftmargin=2.2cm,
  labelwidth=2cm,
  labelsep=.2cm,
  parsep=0mm,
  itemsep=2pt
  ]

\item[${[}n{]}$] set of integers $1,\ldots, n$

\item[$\calO$] Sample space, with $\calO\subset\mR^d$ and Borel $\s$-algebra $\calB$;

\item[$\calM$] Statistical model, collection of probability measures on $(\calO,\calB)$;

\item[$\psi(P)$] Target parameter functional, mapping from $\Mcal \rightarrow \RR$

\item[$\phi_P$] Influence function, mapping from $\Ocal \rightarrow \RR$

\item[$Pf, P{[}f{]}$] Expectation functional $Pf = P[f] = \int f \, dP$

\item[$\mP_n, \mP_nf$] Empirical distribution function

\item[$L^2_0(P)$] The set of measurable functions $\{f:\Ocal \rightarrow \RR: Pf = 0 \ , Pf^2 < \infty\}$
    
\item[$P^*$] (Unknown) data generating distribution;
\item[$o_{P^*}(1/\sqrt{n})$] Represents a sequence of random variables that converges to zero in probability with respect to probability measure $P^*$, at a rate that is faster than $1/\sqrt{n}$.
\item[$K, \Hcal_K$] The PSD kernel and associated RKHS with kernel function $K(\cdot, \cdot)$

\item[$k_x$] The function 
  $k_x \coloneqq K(x, \cdot): \Xcal \rightarrow \RR$, for $x$ in support $\Xcal$

\item[$k_\mbf{x}$] Vectorized version of function $k_x$, where $k_\mbf{x}\coloneqq [k_{x_1},...,k_{x_n}]^T$

\item[$K_\mbf{xy}, K_\mbf{x}$] Matrix of values $K_\mbf{x} \coloneqq [K(x_i, x_j)]_{i,j} \in \RR{n\times n}$, $K_{\mbf{xy}} \coloneqq [K(x_i, y_j)]_{ij}\in \RR{n\times m}$

\item[$K_P, \Hcal_P$] The mean-zero kernel (with respect to $P$) and associated RKHS

\item[$\calC_0(\calX)$] The space of continuous functions that vanish at
infinity with the uniform norm

\item[$\calC_0$] A kernel $K$ is $\calC_0$ provided that $\calH_K$ is a subspace of
$C_0(\calX)$

\item[$\widetilde{\Mcal}_P$] The collection of perturbations $\widetilde{\Mcal}_P\coloneqq\{(1+h)p \ |\ h \in \Hcal_P \text{ such that }(1+h)p > 0\}\cap \Mcal$

\item[$\widetilde{\Mcal}_{P, n}$] The set of $n$-dimensional parametric submodels, defined as follows:
\[\widetilde{\Mcal}_{P, n} \coloneqq\Set[\big]{
    p_{\bm{\a}} \coloneqq p(\bm{\a}) \coloneqq
    (1 + \bm{\a}^\top k_\mbf{O})p
  }{
    \bm{\a}\in\mR^n \text{ such that } p(\bm{\a}) > 0
  } \cap \calM \]

\item[$\widetilde\calM^\ell$] The collection of perturbations $\widetilde\Mcal_{P_n^l}$ corresponding to $\ell$-th iteration's distribution, $P_n^l$

\item[$p(\epsilon|h),\ p_{\epsilon, h}$] 1-dimensional parametric model $(1+\epsilon h)p$

\item[$P_{\epsilon,h}$] The distribution corresponding to the density implied by the 1-dimensional parametric model $(1+\epsilon h)p$

\item[$p(h), p_h$] RKHS-based model $(1+h)p$, where $h \in \Hcal_{P}$


\item[$L_{p, h}$] Log-likelihood likelihood loss, given by $L_{p,h}\equiv -\log(p_h)$



\item[$D_P\psi$] Derivative functional (assumed to be continuous, linear) that maps $L_0^2(P)\rightarrow \RR$,\newline $\frac{d}{d\epsilon}\psi({P_{\epsilon, h}})|_{\epsilon=0}=D_P\psi[h]$ for all $h \in L_0^2(P)$

\item[$\pi_S$] The orthogonal projection operator $\Hcal \rightarrow S$, projects a function $h \in \Hcal$ onto a set of functions $S$

\item[$\bm\a_n^i$] The optimal perturbation for objective function $L(\h{p}_n(\alpha, p), \lambda)$ at iteration $i$ for KDPE

\item[$P_n^i, \ p_n^i$] Estimated data distribution of TMLE/KDPE at iteration $i$

\item[$P^\infty_n, \ p_n^\infty$] Limiting/final estimate for TMLE/KDPE

\item[$\h{P}$] Generic estimated distribution, subscripts/superscript added accordingly

\item[$\h\psi_{\text{param.}}$] Estimated target parameter value $\psi_{\text{param.}}(\h P)$ using plug-in distribution $\h P$, subscripts/superscript added accordingly

\item[$Q(P)$] Relevant portion of distribution (e.g. $
\{Q_a(X)\}_{a \in \{0,1\}}= \{P[Y=1|X, A=a]\}_{a\in \{0,1\}}$).

\item[$\Omega$] An event (or set of events) such that has probability 1 under $P^*$, i.e. $P^*(\Omega)=1$. The standard example is the set of all possible $n$-sample realizations $\{\{O_i(w)\}_{i=1}^n \ :\ w \in \Omega\}$ from distribution $P^*$.   

\item[$f(w), P(w)$] The random function / distribution, indexed by events $w \in \Omega$. Often, we use $w \in \Omega$ to index an $n$-sample realization, $\{O_i(w)\}_{i=1}^n$, from DGP $P^*$.   
\end{description}

\newpage

\section{KDPE Theory} \label{app:theorem_main}


\subsection{Additional Details Regarding $P_{\epsilon, h}$ and  $\widetilde{\Mcal}_P$.} \label{app:questions_from_GgGE}

\paragraph{Validity of Linear Submodels.} The linear submodel $P_{\epsilon, h}$ is indeed a valid submodel (i) for all $h \in L_0^2(P)$ and (ii) for $|\epsilon| \leq \epsilon_{h, P}$, where $\epsilon_{h, P}$ is an upper bound on the magnitude of $\epsilon$ depending on $h$ and $P$.  For example, $h = p'/p - 1$, the path connecting $P$ and another unique (up to the dominating measure) distribution $P' \in \mathcal{M}$, has $0 \leq \epsilon \leq \epsilon_{h,P} = 1$, which corresponds to $p_{\epsilon,h} = (1-\epsilon)p + \epsilon p'$; for $h' = h/2$, our bound $\epsilon_{h', P}$ becomes 2. Because we are in the nonparametric setting, the only constraint on our model class $\Mcal$ is that we remain a valid density, and thus our linear paths $P_{\epsilon, h}$ are fully contained in the model class $\Mcal$.

\paragraph{Additional Comments Regarding $\widetilde\Mcal_P$.} We note that $\widetilde{\Mcal}_P$ will never be empty in the nonparametric case. Intuitively, Eq. \ref{eq:hcalmodel} simply restricts the feasible scores $h \in L_0^2(P)$ to a subset of scores $h \in \Hcal_P$ , which is dense in $L_0^2(P)$ and therefore nonempty for any $P$. At each $P$, the nonparametric tangent space is $L_0^2(P)$, which means that all $h \in L_0^2(P)$ are feasible directions. In other words, one can move at least an infinitesimal amount in each $h \in L_0^2(P)$ and remain within the model class. In our bounded kernel setting with universal (e.g. Gaussian) kernels, $\Hcal_P$ is a dense subset of 
$L_0^2(P)$. This implies for each $P$ and $h \in \Hcal_P$, $p(h)$ is a valid submodel in , given a correct magnitude (i.e. obeys $\epsilon_{h,P}$ bounds implied by $\Mcal$, where $h$ is rescaled to be unit norm). Thus, the restriction imposed in Eq. \ref{eq:hcalmodel} corresponds to bounding the maximum magnitude $\epsilon$ we can move for each unit-norm $h \in \Hcal_P$ (i.e. $\epsilon_{h, P}$), such that we remain a valid density.

\subsection{Proof of Proposition~\ref{thm:mean-zero-kernel}}
\label{proof:mean-zero-kernel}
\begin{proof}[Proof of Proposition~\ref{thm:mean-zero-kernel}]
Because $K$ is bounded, the expectation functional $h\mapsto Ph$ is continuous
on $\calH_K$.
Next, we note that $\varphi \in \calH_K$ and it is the Riesz representer of the expectation functional (we have shown this elsewhere and remark here that the rigorous argument considers a Riemann approximation to the expectation integral and weakly convergent subsequences).
It follows that $h\in\calH_P$ if and only if $\brk{h}{\varphi}_K = 0$, and from this we can deduce the projection operator onto $\calH_P$ and the reproducing kernel of that subspace.

Finally, given $h\in L^2_0(P)$, let $\set{g_n}\subset \calH_K$ be an approximating sequence of $h$ in $\calH_k$.
Let $g_n =\tilde g_n + \mE_P[g_n]\varphi/\mE_{P\times P}[K]$ be the RKHS orthogonal decomposition of $g_n$ into $\calH_P$ and $\calH_P^\perp$, and note that this operation is continuous on $L^2(P)$.
By continuity of expectation in $L^2(P)$, $\mE_P[g_n] \to 0$ in $L^2(P)$.
It follows that the orthogonal part of $g_n$ vanishes in $L^2(P)$ and $\tilde g_n \to h$ in $L^2(P)$.
Thus,
$\calH_P$ is dense in $L^2_0(P)$.
\end{proof}

\subsection{Proof of Theorem~\ref{thm:representer}}
\label{proof:thm_representer}

Let $L(Q(P)):\calX\to\mR$ be a general loss function that identifies the relevant part, $Q(P)$, for the target parameter of interest $\psi(P) = \psi(Q(P))$, satisfying the property 
that $\mE_{P} L(Q(P)) \le \mE_{P} L(Q')$ for every possible $Q'$.
The log-likelihood $L(P) = -\log p$ is a special case of such a loss function. 
The risk associated with the log-likelihood
identifies the entire distribution $P$ based on the properties of the Kullback–Leibler divergence.


\begin{proof}[Proof of Theorem~\ref{thm:representer}]
For $h\in\calH$ consider the orthogonal decomposition $h = h_S + h_S^\perp$ into the closed subspaces $S_n$ and $S_n^\perp$.
From the reproducing property it follows that 
$h(O_i) = \brk{h_S + h_S^\perp}{K_{O_i}} = h_S(O_i)$ and $h_S^\perp(O_i)=0$ for every $i\in [n]$.
This means that inputs $h$ and $h_S$ yield the same value of the empirical risk $\sum_i \tilde L(O_i)$ in the objective function of \eqref{prob:kMLE}.
Furthermore, we have assumed that if $h$ is admissible in the optimization problem \eqref{prob:kMLE}, then so is $h_S$.
The claim follows by noting that 
$\norm{h}^2_\calH = \norm{h_S}^2_\calH +\norm{h_S^\perp}^2_\calH$.
\end{proof}

\subsection{Proof of Proposition~\ref{prop:pnh0}}
\label{proof:prop_pnh0}
\begin{remark}\label{remark:prop_pnh0}
Assumption~\ref{assump:S3}
assumes that our algorithm converges and terminates in finite time. 
It is important to highlight that this assumption
    is commonly adopted in the TMLE literature~\citep{van2006targeted}. As demonstrated in Table~\ref{table:hyperparamtuning},  our algorithm, Algorithm~\ref{alg:modified_KDPE}, converges effectively in practice, typically requiring only a few iterations. While the regularization of the perturbation size (i.e., 
    the norm of $h$) could potentially ensure finite convergence, it is beyond the scope of this paper and remains a topic for future research.
\end{remark}
\begin{proof}[Proof of Proposition~\ref{prop:pnh0}]
This follows from the first order condition to \eqref{prob:kMLE} which must hold at the interior local minimum.
First, consider the variations at $p^\infty_n$ in $\widetilde\calM_{P^\infty_n}$ along the $n$ directions (scores) given by $K_{P^\infty_n}({O_1}, \cdot), \ldots, K_{P^\infty_n}({O_n}, \cdot)$ and observe:
\begin{align*}
    0
    =&
    \frac{\partial}{\partial_{\a_j}}\Big|_{\a=0} \sum_i 
    -\log\Big( 1+\sum_j\a_j (K_{P^\infty_n})(O_j,O_i) \Big) p^\infty_n(O_i)
    +
    \lambda_K \a^\top K_\mathbf{O}\a
    \\
    &\qquad =
    \left[ -\sum_i \frac{
      K_{P^\infty_n}(O_j,O_i) p^\infty_n(O_i)
    }{
      (1+\sum_j\a_j (K_{P^\infty_n})(O_j,O_i)) p^\infty_n(O_i)
    }
    +
    \lambda_K (K_\mathbf{O}\a)_j \right]\Bigg|_{\a = 0}
    \\
    &\qquad =
    -\sum_i K_{P^\infty_n}(O_j,O_i) .
\end{align*}
This shows that the estimating equation holds for each of these scores, and therefore their linear combinations.
Second, by Theorem \ref{thm:representer}, any $h\in\calH_{P^\infty_n}$ that is orthogonal to these scores must have $h(O_i)=0$ for all $i\in [n]$ and therefore satisfy 
Equation \eqref{eq:ScoreEquations}.
From these two observations, we conclude the proposition.
\end{proof}

\subsection{Proof of Theorem~\ref{thm:main}}
\label{proof:thm_main}

\begin{remark}\label{remark:thm_main}
Assumptions \ref{assump:secondorderremainder}-\ref{assump:S3} and conclusions derived from them are standard in the TMLE literature \citep{van2006targeted, kennedy2023}.
Assumption \ref{assump:S4} is mostly technical and ensures that the set of scores that satisfy the estimating equation \eqref{eq:ScoreEquations} is sufficiently rich to achieve debiasing for the target parameter by approximating its efficient influence curve estimating equation asymptotically.

Assumption \ref{assump:S5} is our main regularity assumption on the model $\calM$ and functional $\psi$ needed to control the plug-in bias term $\mP_n \phi_{\h P}$ of the proposed KDPE estimator.
While this assumption provides a clear intuition for the debiasing mechanism of KDPE, it is not strictly required for the method to work.
Less extreme conditions on
the regularity of $\calF$, such as $C^\a$ or Sobolev norm bounds (see \citealt[Ch 19]{van2000asymptotic} and RKHS approximation theory \citealt{smale2007learning}), can imply the Donsker property in \textbf{S\ref{assump:S5}}.

The Donsker assumption \textbf{S\ref{assump:secondorderremainder}} has been successfully relaxed for related estimators in the literature \citep{zheng2010, kennedy2023, chernozhukov2017double} via sample-splitting.
Our initial analysis of DKPE crucially relies on the Donsker assumption in the argument for asymptotic debiasing (based on \textbf{S\ref{assump:S5}}).
We leave the analysis of KDPE with sample-splitting (as done in \citealt{zheng2010}) to future work.
\end{remark}





Recall that the proof consists of three steps: Step 1)  
our uniform approximation guarantees are with respect to the true distribution $P^*$ (as opposed to the KDPE distribution ${P}_n^\infty$),
and Step 2) decompose the error of the KDPE estimate similar to that of Equation (2), and Step 3) showing that the empirical process term is of order $o_{P^*}(1/\sqrt{n})$.  

\textbf{Step 1} is achieved by establishing the existence of a sequence in $\mathcal{H}_{P_n^\infty}$ that approximates $\phi_{P_n^\infty}$ arbitrarily well with respect to the $L^2(P^*)$ norm via change of measure
(via Assumption 4, square integrability of density ratios, and the density of $\mathcal{H}_{P_n^\infty}$ in $L_0^2(P_n^\infty)$.)
In \textbf{Step 2}, by using the result from Step 1, the gradient condition of Assumption 3, and the expansion in Lemma 1, we obtain that the error of the KDPE estimate can be decomposed as four terms: (i) the empirical average of $\phi_{P^*}$, which gives us our desired result, (ii) a standard empirical process term across other estimation methods (e.g. TMLE, DML, one-step), (iii) a standard second-order remainder, and (iv) an empirical process term that involves the approximating sequence $h_{j}$. Terms (ii) and (iii) are controlled by Assumptions 1 and 2 respectively such that they are $o_{P^*}(1/\sqrt{n})$. 
In \textbf{Step 3}, we use our novel assumption, Assumption 5, to 
establish that the nonstandard term (iv) such that it is of order $o_{P^*}(1/\sqrt{n})$. Thus, terms (ii-iv) vanish at the appropriate rate, resulting in the conclusion of Theorem 2.

\begin{proof}[Proof of Theorem~\ref{thm:main}]

Because $\phi_{P_n^\infty} \in L_0^2(P_n^\infty)$, and the RKHS $\Hcal_{P_n^\infty}$ is dense in $L^2_0(P_n^\infty)$, there must exist a sequence $\{h_j\}_{j \in \NN} \subset \Hcal_{P_n^\infty}$, where for all $\epsilon > 0$, there exists a $j(\epsilon) \in \NN$ such that for all $j > j(\epsilon)$, $\|h_j - \phi_{P_n^\infty} \|_{L^2(P_n^\infty)} < \epsilon$.

To get the desired convergence rates with respect to true measure $P^*$, we first note that under Assumption \ref{assump:S4}, $|P_n^\infty
(p^*/p_n^\infty)|$ is bounded by some constant $\eta(\omega)$ for sample paths $\omega \in \Omega$ by Holder's inequality:
\[P_n^\infty(p^*/p_n^\infty) \leq P_n^\infty[1]^{1/2} \times \underbrace{\| p^*/p_n^\infty\|_{L^2(P_n^\infty)}}_{< \infty} = \eta(\omega) \]
Assumption \ref{assump:S4} also enables change of measure by a dominating measure $\lambda$. By the Cauchy-Schwartz inequality, 
\begin{align*}
    \|h_j - \phi_{P_n^\infty}\|_{L^2(P^*)}^2 &= \int (h_j - \phi_{P_n^\infty})^2\ dP^* 
    = \int (h_j - \phi_{P_n^\infty})^2 (\frac{p_n^\infty}{p_n^\infty}) p^* d\lambda(O)\\
    &\leq \int \frac{p^*}{p_n^\infty} dP_n^\infty \times\int(h_j - \phi_{P_n^\infty})^2 dP_n^\infty(O) \\
    &\leq \eta(\omega) \|h_j-\phi_{P_n^\infty}\|_{L^2(P_n^\infty)}^2 
\end{align*}

By setting $j > j(\epsilon/\sqrt{\eta(w)})$, this leads to an equivalent result under $L^2(P^*)$ norm (rather than $L^2(P^\infty_n)$ norm):
\[
\|h_j - \phi_{P_n^\infty}\|_{L^2(P^*)}\leq \sqrt{\eta(\omega)}\|h_j - \phi_{P_n^\infty}\|_{L^2(P_n^\infty)} < \epsilon.
\]

Thus, for all $\epsilon > 0$, there exists $j(\epsilon/\sqrt{\eta(\omega)})\in \NN$ such that $j > j(\epsilon/\sqrt{\eta(\omega)})$ implies $\|h_j - \phi_{P_n^\infty}\|_{L^2(P^*)} < \epsilon$. An immediate consequence is another useful result for our analysis. By a simple application of Holder's inequality, 
\begin{align*}
|P^*(h_j - \phi_{P_n^\infty})| &\leq 
P^*[1 \times |h_j - \phi_{P_n^\infty}|]\\
&\leq P^*(|h_j-\phi_{P_n^\infty}|^2)^{1/2} \times P^*(1^2)^{1/2} \quad\quad \text{(Holder's Inequality for $p=q=2$.)}\\
&= \|h_j - \phi_{P_n^\infty} \|_{L^2(P^*)} < \epsilon.
\end{align*}

These two results show that for any sample path $\omega \in \Omega$, there exists a sequence $\{h_j\}_{P_n^\infty} \subset \Hcal_{P_n^\infty}$, such that for all $\epsilon >0$, there exists a $j(\epsilon/\sqrt{\eta(\omega)}) \in \NN$ where $j > j(\epsilon/\sqrt{\eta(\omega)})$ implies the following:
\[
\textbf{(A)}:\ \|h_j - \phi_{P_n^\infty}\|_{L^2(P^*)}< \epsilon,
\quad\quad
\textbf{(B)}:\ |P^*(h_j - \phi_{P_n^\infty})| < \epsilon.
\]

Under the conditions of Assumption \ref{assump:S3} and \ref{assump:S4}, we analyze our estimator using the expansion in Lemma \ref{lemma:expension}:
\begin{align}
    \psi(P_n^\infty) - \psi(P^*) = \PP_n\phi_{P^*} \underbrace{- \PP_n\phi_{P_n^\infty} + (\PP_n - P^*)\big[\phi_{P_n^\infty} - \phi_{P^*}\big]}_{(a)=\text{``double empirical process term''}}+ \underbrace{R_2(P_n^\infty, P^*)}_{(b)=\text{second-order remainder}}.\label{eq:ate_theory_expansion}
\end{align}
Under Assumption \ref{assump:secondorderremainder}, term $(b)$ is $o_{P^*}(1/\sqrt{n})$, and is therefore asymptotically negligible. It remains to show that term $(a)$ vanishes appropriately. Let $\{j^*(\omega, n)\}_{n\in\NN}$ be the index sequence such that $\epsilon = o_{P^*}(1/\sqrt{n})$ for sample path $\omega$. Under Assumption \ref{assump:S3}, $\PP_n h =0$ for every $h \in \Hcal_{P_n^\infty}$, giving us the following equality:
\[(a)= - \PP_n(\phi_{P_n^\infty})+ (\PP_n-P^*)(\phi_{P_n^\infty}-\phi_{P^*})= - \PP_n({-h_{j^*}}+\phi_{P_n^\infty})+ (\PP_n-P^*)(\phi_{P_n^\infty}-\phi_{P^*}).\]

Given property \textbf{(B)}, we can rewrite the "double empirical process term" $(a)$ as follows:
\begin{align*}
    (a) &= -\PP_n(\phi_{P_n^\infty}-h_{j^*}) + (\PP_n-P^*)(\phi_{P_n^\infty}-\phi_{P^*}) \\
  &= -\PP_n(\phi_{P_n^\infty}-h_{j^*}) + (\PP_n-P^*)(\phi_{P_n^\infty}-\phi_{P^*}) + P^*(\phi_{P_n^\infty} - h_{j^*}) \underbrace{-P^*(\phi_{P_n^\infty} - h_{j^*})}_{= o_{P^*}(1/\sqrt{n})}\\
    &=-[\PP_n(\phi_{P_n^\infty})-P^*(\phi_{P_n^\infty})] + [\PP_n(h_{j^*})-P^*(h_{j^*})] + (\PP_n-P^*)(\phi_{P_n^\infty}-\phi_{P^*})  + o_{P^*}(1/\sqrt{n})\\
    &=(\PP_n-P^*)(h_{j^*}-\phi_{P^*}) +  o_{P^*}(1/\sqrt{n})\\
    &= \underbrace{(\PP_n-P^*)(h_{j^*}-\phi_{P_n^\infty})}_{=(c)} + \underbrace{(\PP_n - P^*)(\phi_{P_n^\infty} - \phi_{P^*})}_{=(d)} +  o_{P^*}(1/\sqrt{n}).\label{eq:(a)}
\end{align*}
Term $(d) = o_{P^*}(1/\sqrt{n})$ by Assumption \ref{assump:empprocess}, and term $(c) = o_{P^*}(1/\sqrt{n})$ by Assumption \ref{assump:S5} and Property (\textbf{A}) via Lemma 19.24 of \citealp{van2000asymptotic}. Thus, term $(a) = o_{P^*}(1/\sqrt{n})$. We conclude that $\psi_{P_n^\infty} - \psi(P^*) = \PP_n\phi_{P^*} + o_{P^*}(1/\sqrt{n})$.

\end{proof}

\section{Simulation Details and Additional Empirical Results } \label{app:simulation}
This section provides pseudocode for the bootstrap procedure, empirical justification for our choice of hyperparameters $c, \gamma$, and additional experiments for the bootstrap variance estimation procedure. All simulations were performed on a Dell Desktop with 11th Gen Intel Core i7, 16 GB of RAM. All code for the simulations provided in both this section and Section \ref{sec:sims} can be found on 
https://github.com/anonymous/KDPE. 

\subsection{Baseline Methods}\label{app:sim_mean_zero_construction}
\paragraph{Initializing $P_n^0$ for KDPE} For all simulations, we use the \texttt{SuperLearner} \citep{sl2007} and \texttt{tmle, ltmle} \citep{tmle_package, ltmle} packages in \texttt{R}. For \texttt{SuperLearner}, we set our initial base learners as \texttt{SL.randomForest}, \texttt{SL.glm}, and \texttt{SL.mean}. For \texttt{tmle, ltmle}, we use the default settings, except for setting the parameter \texttt{g.bound} to 0. For each run of the simulations, we initialize $P_0^n$ for both TMLE and KDPE to the same initial density estimate to capture the differences in the two debiasing approaches. 

\paragraph{Optimization Step Details} For the optimization solver, we use the package \texttt{CVXR} \citep{cvxr2020} with the default settings to obtain $\alpha^{i+1}$, which corresponds to  the parametric update step in Algorithm~\ref{alg:modified_KDPE}. Because the log-likelihood loss function $L(\alpha, P)= \PP_n\log((1+\sum_{i=1}^n \alpha_i K_P(o_i, \cdot))p)$ is concave and all constraints are linear, our update step can be solved by existing convex optimization software. 
To avoid solvers, a potential future direction is to explore alternative loss functions that have known closed-form solutions, such as the mean squared error (MSE), rather than using the log-likelihood loss.

\paragraph{Construction of $K_P(o_i, \cdot)$} To construct our $n$-dimensional parametric submodels, we use the mean-zero RBF kernel $K_P(\cdot, \cdot): \Ocal \times \Ocal \rightarrow \RR$, which takes the following form:
\[
K_P(O,O') = \exp(-\|O-O'\|^2_2) - \frac{\int \exp(-\|O-s\|^2_2)dP(s) \int \exp(-\|O'-s\|^2_2)dP(s)}{\int\int \exp(-\|s-t\|^2_2)dP(s)dP(t)},
\]

where $O, O' \in [0,1]\times\{0,1\}^2$. Following classical examples \citep{book2018} and existing software \citep{tmle_package} for TMLE, we fix $X \sim \PP_n(x)$ and $A \sim P_n^0(A|X)$ (the initial estimate for the propensity score function). This enables us to directly calculate the integral, rather than approximate $K_P$:
\begin{equation}\label{eq:kp_experiments}
    K_P(O,O') = \exp(-\|O-O'\|^2_2) - \frac{f_P(O)f_P(O')}{c_P},
\end{equation}
where $f_P$ and $c_P$ are defined as follows:
\begin{align*}
    f_P(O') &= f_P((x',a',y')) 
    = \frac{1}{n}\sum_{i=1}^n\sum_{(a,y) \in \{0,1\}^2} P(A=a,Y=y|X=x_i)\exp(-\|(x',a',y')-(x_i,a,y) \|^2),\\
    c_P &= \frac{1}{n}\sum_{i=1}^n\sum_{a \in \{0,1\}}\sum_{y \in \{0,1\}} P(A=a, Y=y|X=x_i)f(x_i,a,y).
\end{align*}
Note that this calculation is inherent to our setting of binary treatments $A$ and outcomes $Y$ with the empirical marginal distribution for $X$ -- other set-ups may not permit a direct calculation of the integral, and may require integral approximations. We defer those experiments to future work. 

\paragraph{Settings for HAL-MLE} We use the HAL implementation provided by \citealp{coyle2022hal9001-rpkg}. For DGP1, we estimate the conditional densities $P(Y|A,X)$; for DGP2, we estimate conditional densities $P(Y|A_1, L_1, A_0, X)$ and $P(L_1|A_0,X)$. All estimation with HAL is done with base settings, using the link function setting \texttt{family = "binomial"} to respect the $[0,1]$ bounds we make on our data-generating process.

\subsection{Restricted Tangent Space Modifications}\label{app:DGP12} 
The following update steps for DGP1 and DGP2 provide the projection step required for the restricted tangent spaces tested in our simulations. These updates provide explicit examples of conditional tangent space updates, where one is only allowed to update a specific component of the initial DGP estimate, but imposes non-parametric assumptions on this component. A simple example of this restricted model setting is treatment effect estimation with known propensity scores, but no restrictions on the conditional outcome function. Another example is offline estimation of a proposed policy's value, where the treatment assignment policy is specified by the user, but no restrictions are made on the the reward distribution. In the exposition of KDPE, we assume that we update the entire density $P_n^\ell(O)$, and therefore $L_0^2(P_n^\ell)$ is the correct tangent space under our nonparametric assumptions. However, in our simulations, we only aim to perturb the conditional density functions (i.e. $P(Y|A,X)$ in DGP1, $P(L_1|A_0, X), P(Y|A_1, L_1, A_0, X)$ in DGP2). For example, in DGP1, the scores (i.e. perturbation directions) of the conditional density function $P(Y|A,X)$ live in the conditional/projected tangent space $L_0^2(P_n^\ell(Y|A,X))$, as opposed to the more general tangent space of $L_0^2(P_n^\ell)$.

To get the specific updates for DGP1, DGP2, we project the chosen score (i.e., $h \equiv [\alpha_n^\ell]^\top k_{\mbf{O}}^\ell$) into the relevant tangent space. The only change in Algorithm \ref{alg:kdpe_main}
is the following step, which occurs at the end of each loop:
$${p}_{n}^{\ell+1} = p_n^\ell([\mbf\alpha_n^\ell]^\top k_{\mbf{O}}^\ell) = [1+[\mbf{\alpha}_n^\ell]^\top k_{\mbf{O}}^\ell]p_{n}^\ell.$$

For all set-ups in our simulation, the tangent space factorizes nicely, which means we simply need to project our chosen score $h$ at each iteration. We demonstrate the necessary modifications to obtain our results. 

\paragraph{DGP1} Since the conditional density function $P_n^\ell(A|X)$ does not show up in the expression of our final plug-in estimate of the target parameters (ATE, RR, OR), 
it suffices to consider only updating the conditional density function $P_n^\ell(Y|A,X)$, with distribution $P_n^\ell(X) = \PP_n(X)$ fixed. The tangent space $L_0^2(P_n^\ell)$ decomposes as follows:

\[L_0^2(P_n^\ell) = L_0^2(P_n^\ell(X)) \oplus L_0^2(P_n^\ell(A|X)) \oplus L_0^2(P_n^\ell(Y|A,X)), \]

where $L_0^2(P_n^\ell(Y|A,X))$ is space of scores (i.e., directions of change) corresponding to $P_n^\ell(Y|A,X)$. To obtain the component of our chosen score $h = [\alpha_n^\ell]^\top k_{\mbf{O}}^\ell$ relevant for $P_n^\ell(Y|A,X)$, we simply project this into $L_0^2(P_n^\ell(Y|A,X))$:
\[h' = h - P_n^\ell(h|A,X), \quad\quad p^{\ell+1}(Y|A,X) = (1+h')p^\ell(Y|A,X).\]

\paragraph{DGP2} Similarly,
since $P(A_1|L_1, A_0, X)$ and $P(A_0|X)$ do not show up in the expression of
our final plug-in estimate of the target parameters, 
we do not need to update these components. We fix the marginal distribution of baseline covariates as $P_n^0(X) = \PP_n(X)$. The two components relevant for our target parameters are $P(L_1|A_0,X)$, and $P(Y|A_1,L_1,A_0,X)$. The tangent space for DGP2 factorizes as follows:
\[L_0^2(P_n^\ell) = L_0^2(P_n^\ell(X)) \oplus L_0^2(P_n^\ell(A_0|X)) \oplus L_0^2(P_n^\ell(L_1|A_0,X)) \oplus L_0^2(P_n^\ell(A_1|L_1,A_0,X)) \oplus L_0^2(P_n^\ell(Y|A_1,L_1,A_0,X)), \]
where $L_0^2(P_n^\ell(L_1|A_0,X))$ and $L_0^2(P_n^\ell(Y|A_1,L_1,A_0,X))$ are the corresponding tangent spaces for $P_n^\ell(L_1|A_0,X)$ and $P_n^\ell(Y|A_1,L_1,A_0,X)$ respectively. At the end of each iteration, we project the chosen score $h$ into the tangent space of these relevant components, and update accordingly:
\begin{align*}
    h' = P_n^\ell(h|L_1,A_0,X) - P_n^\ell(h|A_0,X), \quad & \quad p_n^{\ell+1}(L_1|A_0,X) = (1+h')p_n^\ell(L_1|A_0,X),\\
    h'' = h - P_n^\ell(h|A_1, L_1,A_0,X), \quad & \quad p_n^{\ell+1}(Y|A_1,L_1,A_0,X) = (1+h'')p_n^\ell(Y|A_1,L_1,A_0,X).
\end{align*}

The steps outlined above project the loss-minimizing / likelihood-maximizing updates from scores in $L_0^2(P_n^\ell)$ to scores in the correct tangent spaces implied by the component in our density estimate we wish to update (i.e. $L_0^2(P(Y|A,X))$ in DGP1, $L_0^2(P(L_1|A_0, X)), L_0^2(P(Y|A_1, L_1, A_0, X))$ in DGP2).

\paragraph{Other Semi-Parametric Extensions.} Beyond projecting the updates (i.e. loss-minimizing perturbations) directly in our algorithm, analogous results (i.e. efficiency, asymptotic normality) for semi-parametric models can be directly achieved by our KDPE framework with the appropriate choice of kernel function. For closed, linear tangent spaces $T_P$, let $H_P \equiv H \cap T_P$ be the intersection of the tangent space $T_P$ and a universal RKHS $H$. The intersected space $H_P$ itself is an RKHS, with corresponding kernel $K_P$. By using this kernel for KDPE, one obtains the same results for the semiparametric models. For nonparametric models, $K_P$ was simply the kernel corresponding to the projection of $H$ onto mean-zero functions, given in Proposition 1. In semi-parametric settings, this requires a projection of $K$ onto $H \cap T_P$ in the norm of the RKHS $H$. For general semi-parametric settings, this projection for the tangent space may have to be derived analytically. While it might seem that we have traded analytic derivation of the EIF for an analytic derivation for the appropriate RKHS kernel function, the plug-in approach of KDPE works for \emph{all target parameters} that satisfy our stated assumptions 1-5. In other words, KDPE replaces the analytic derivation of the IF for each parameter of interest defined on the same semi-parametric model (requires derivation for each parameter of interest, even within the same set of assumptions) with a single analytic characterization of the RKHS projection onto the tangent space of the semi-parametric model (only requires derivation for new modeling assumptions). We chose the non-parametric setting for our paper as the simplest setting to explain KDPE, and plan to explore these extensions in future work.

\subsection{Hyperparameter Testing for KDPE}\label{app:hyperparamtable}
\paragraph{Density Bound and Convergence Criteria} In our experiments, we fixed the sample size at  $n=300$. We conducted tests with different values for the density bound $c$, with $c \in \{0.05, 0.01, 0.002\}$,
and for the stopping criteria, we tested $\gamma \in \{0.025, 0.005,0.001\}$ for DGP1, with $\lambda$ fixed at $0$.
For each combination of settings, we performed 100 simulations to obtain the results presented in Table~\ref{table:hyperparamtuning} and Figure~\ref{fig:hyperparam_dist}.

\begin{table}[H]
\centering
\begin{tabular}{rrrrrrr}
  \toprule
  $c$ & $\gamma$ & ATE RMSE & RR RMSE & OR RMSE & Avg. Iterations\\ 
  \midrule
  0.050 & 0.025  & 0.019 & 0.077 & 0.205 & 1.062 \\ 
    0.010 & 0.025  & 0.016 & 0.063 & 0.180 & 1.135 \\ 
   0.002 & 0.025  & 0.026 & 0.101 & 0.260 & 1.120 \\ 
    0.050 & 0.005 & 0.015 & 0.057 & 0.150 & 2.293 \\ 
    0.010 & 0.005  & 0.004 & 0.026 & 0.091 & 2.445 \\ 
    0.002 & 0.005  & 0.006 & 0.033 & 0.090 & 2.505 \\ 
    0.050 & 0.001  & 0.001 & 0.023 & 0.068 & 4.756 \\ 
    \bf{0.010} & \bf{0.001}  & \bf{0.009} & \bf{0.040} & \bf{0.128} & \bf{5.329} \\ 
   \bf{0.002} & \bf{0.001} &  \bf{0.009} & \bf{0.033} & \bf{0.105} & \bf{5.802} \\ 
   \bottomrule
\end{tabular}
\vspace{0.3 cm}
\caption{Average RMSE and iterations until convergence for DGP 1. No regularization, i.e., $\lambda = 0$.}
\label{table:hyperparamtuning}
\end{table}

 \begin{figure}[h]
     \centering
     \includegraphics[width=0.95\linewidth]{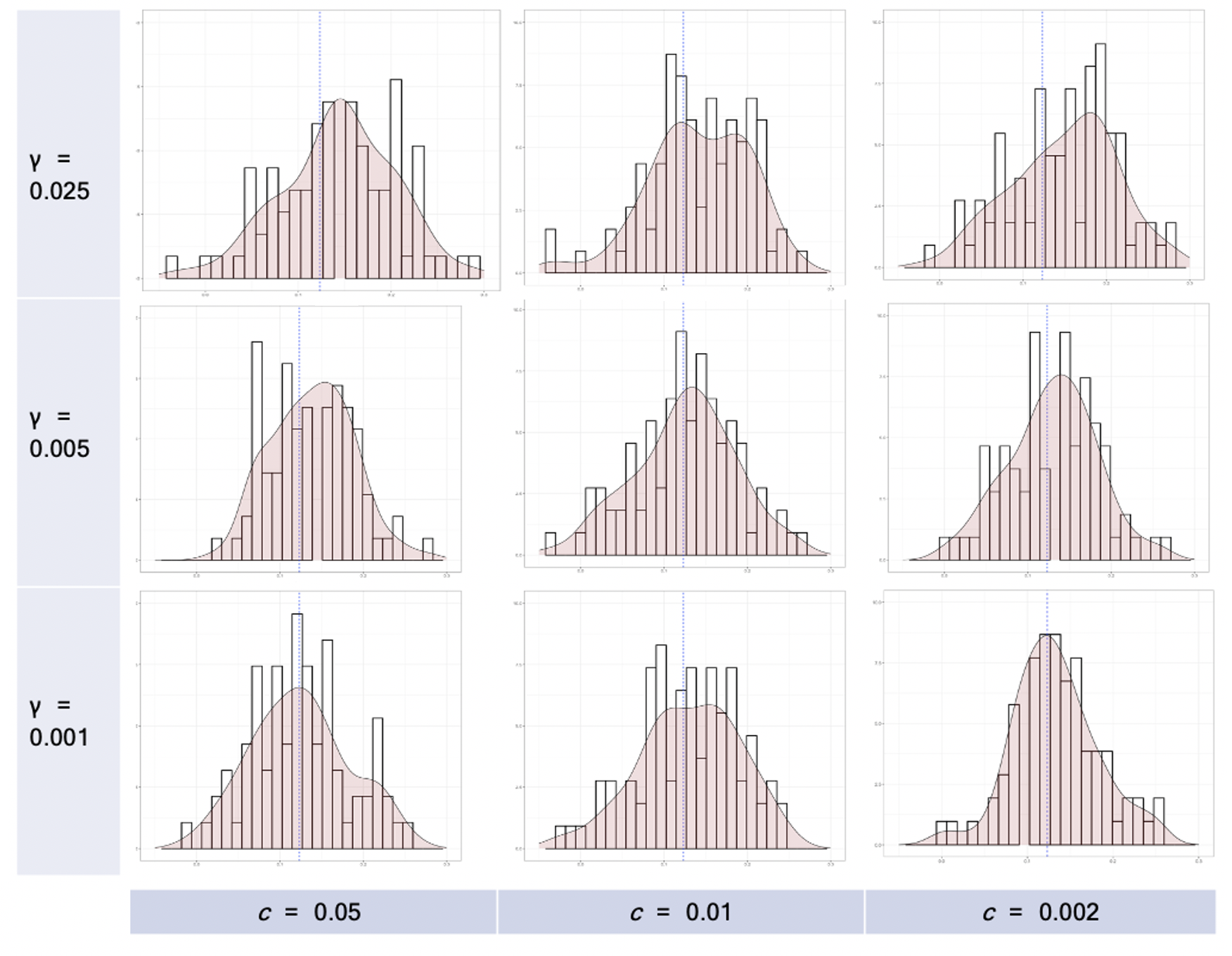}
     \caption{Density histogram of $\psi_{\ATE}(\h{P}_{\text{KDPE}})$ 
      simulations 
     as a function of $c, \gamma$ for DGP1. 
     $X$-axis: ATE values. $Y$-axis: density.
     Blue dotted line: the true ATE under $P^*$.
      , and the $y$-axis indicates the count.
     }
     \label{fig:hyperparam_dist}
 \end{figure}

The empirical results shown in Figure \ref{fig:hyperparam_dist} coincide with the theoretical results shown in Appendix~\ref{app:theorem_main}, which are derived under the setting where we have no explicit density bound $(c= 0)$ and we terminate at a fixed point $(\gamma = 0)$. As we see in Figure \ref{fig:hyperparam_dist}, the simulated distribution of $\psi_{\ATE}(\h{P}_{\text{KDPE}})$ with the smallest tested values of $c, \gamma$ obtains a distribution most similar to limiting normal distribution.
When the density bound is set to large values
(e.g., $c= 0.05$),
the resulting distributions become skewed and deviate significantly from the normal distribution.
This occurs because termination at the boundary (i.e., is tight with respect to the density bound)  prevents the distributions from satisfying 
the gradient conditions shown in Appendix~\ref{app:theorem_main}. 
When using large values for the convergence tolerance
(e.g., $\gamma=0.025$),
the algorithm terminates prematurely at a distribution that deviates significantly
from the fixed point $P_n^\infty$. Thus, to maintain the guarantees shown in Appendix~\ref{app:theorem_main} for Algorithm \ref{alg:kdpe_main}, we want the values of $(\gamma, c)$ to be as small as possible. For DGP1 in  Sections \ref{subsec:emp_dist}, we opt for $c = 0.001, \ \gamma = 0.001$, values at least as small as the tested hyperparameter values. The results of this simulation naturally pose questions of how to set $c, \gamma$ as functions of $n$, such that the residual bias caused by these hyperparameters is $o_{P^*}(1/\sqrt{n})$. This remains an open question for future work.

\paragraph{Regularization Parameter $\lambda$} While the theoretical guarantees of KDPE hold for \emph{any} $\lambda \geq 0$, the choice of $\lambda$ has practical significance, especially in relation to the convergence tolerance parameter $\gamma$. Large values of $\gamma$ (i.e., looser convergence tolerances) with heavy regularization (i.e., large $\lambda$) results in premature termination. We demonstrate these results with DGP1, testing $\lambda \in \{0, 0.001,0.01,0.1,1,10,100\}$\footnote{Note that for $\lambda=0$, which we use throughout our paper, the solver \texttt{CVXR} fails to obtain a solution roughly 15\% of the time. For those interested in testing the code, we recommend $\lambda=0.001$, which results in similar results, but avoids numerical instability issues.}. For each $\lambda$, we ran 100 simulations with $n=300$ to determine the best choice of $\lambda$. As shown in Figure \ref{fig:lambda_testing}, for a fixed value of convergence tolerance $\gamma = 0.001$, we see that larger values of $\lambda$ tend to result in skewed distributions and/or heavy tails. The average iterations for convergence corroborate that this is due to premature termination; the resulting step with heavy regularization results in a relatively small change in distribution and trivially satisfies our convergence criteria. For this reason, for DGP2 (which uses $\lambda=20$), we set the convergence tolerance to $\gamma = 0.0001$, as compared to $\gamma = 0.001$ in the non-regularized case in DGP1. We keep the density bound for DGP2 the same ($c = 0.001$). Future extensions of this work include how to set regularization parameter $\lambda$ as a function of convergence tolerance $\gamma$. 

\begin{figure}
    \centering
    \includegraphics[scale=0.4]{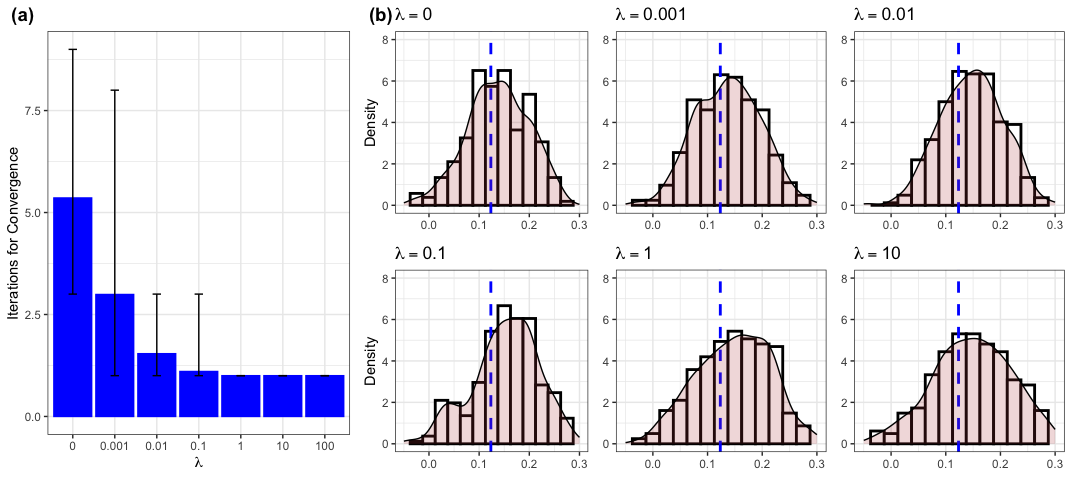}
    \caption{(a): Average number of iterations, with maximum and minimum number of iterations for each $\lambda$ tested ;(b): Distribution of $\h\psi^{\ATE}_{\KDPE}$ in DGP1 for different $\lambda$, 
$c = 0.002, \gamma=0.0001$. Blue line denotes true value. 100 simulations for each setting of $\lambda$. }
    \label{fig:lambda_testing}
\end{figure}

\subsection{Additional (L)TMLE Results for DGP2}\label{append:DGP2}
We provide additional results to show that the poor performance of TMLE in DGP2 is due to the relatively small sample size $n=300$. For $n=800$, Figure \ref{fig:enter-label} see that the distribution of $\h{\psi}^{\ATE}_{\TMLE}$ is far closer to the limiting distribution (shown in purple) than when $n=300$, as shown in the main body of our paper. This indicates that the performance of TMLE for DGP2 at $n=300$ is indeed due to poor finite sample performance.

\begin{figure}[htbp!]
    \centering
    \includegraphics[scale=0.5]{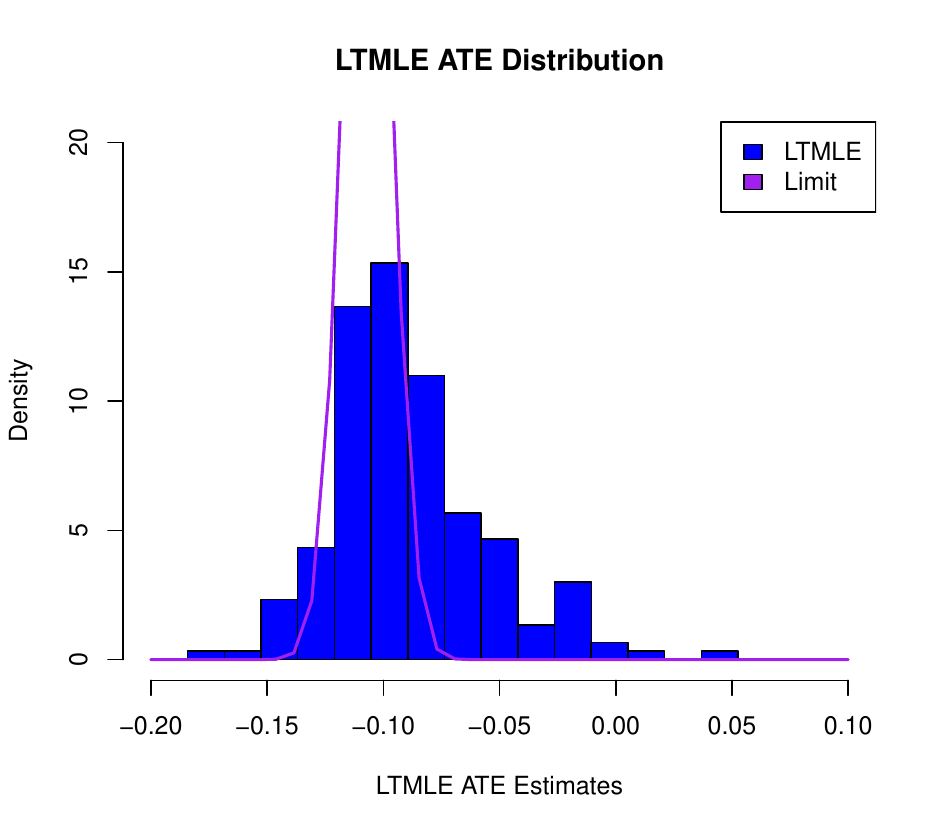}
    \caption{Distribution of $\h\psi_{\TMLE}^\ATE$ for $n=800$, using LTMLE Package for DGP2. Purple shows correct limiting distribution.}
    \label{fig:enter-label}
\end{figure}

\subsection{Additional Empirical Results for Relative Risk and Odds Ratio}

To show that our asymptotic normality holds for multiple target parameters $\psi$, we include the simulated distributions of the KDPE estimate against TMLE and SL in Figure \ref{figure: asymp_normal}, using the same hyperparameter settings described above. 

\begin{figure}[t]
    \centering
    \includegraphics[scale = 0.5]{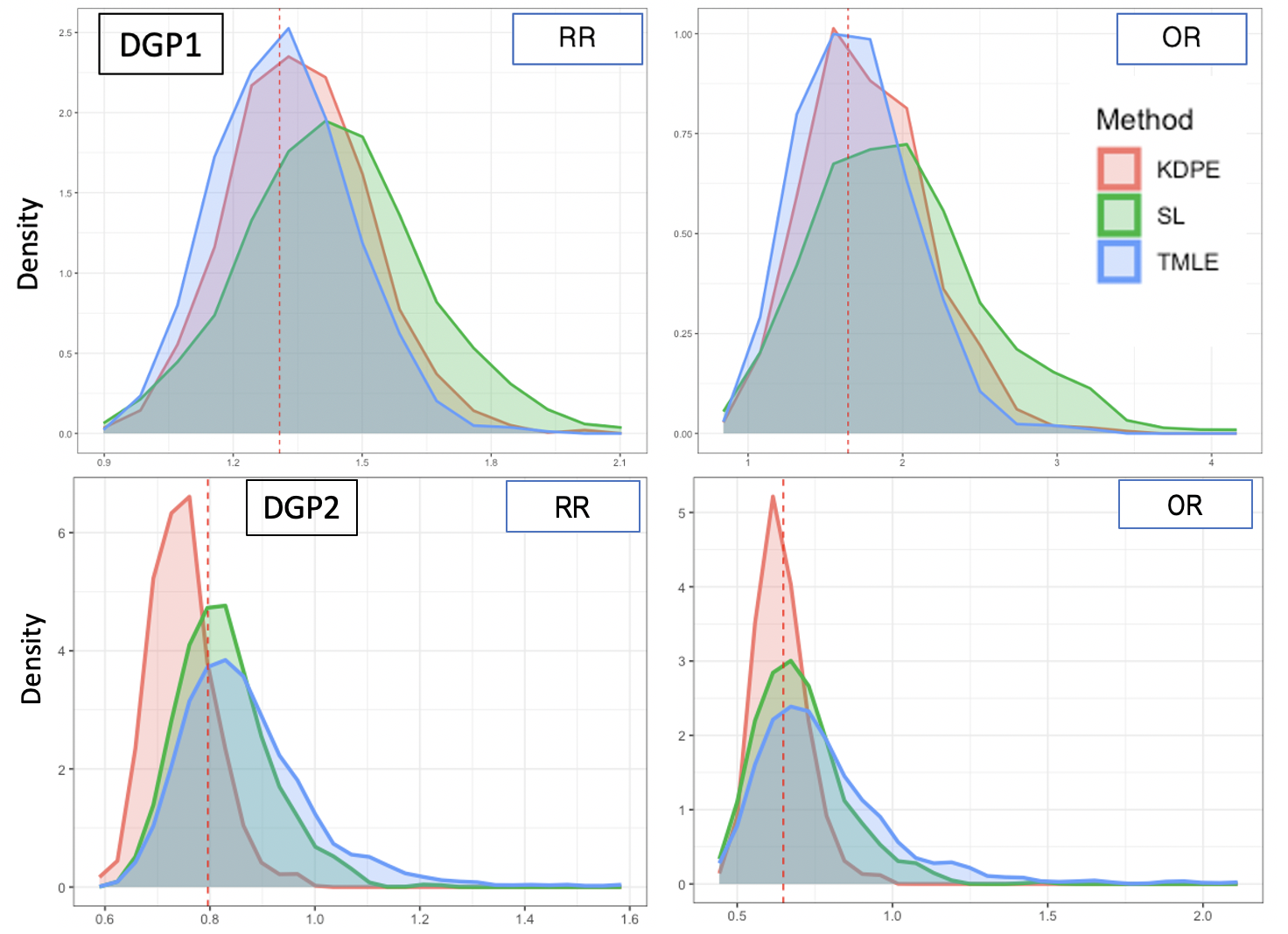}
    \caption{Simulated distributions of $\h \psi_{\mathrm{RR}}, \h \psi_{\mathrm{OR}}$. First row corresponds to DGP1, and second row to DGP2. Red line denotes true value of the target parameter.}
    \label{figure: asymp_normal}
\end{figure}

\subsection{Runtime Complexity for KDPE}
Our algorithm is a meta-algorithm, and thus the computational complexity is dependent on different factors (integration method, optimization software, number of iterations, etc.). We provide computational complexity results per iteration in the case of Algorithm \ref{alg:kdpe_main}. The set-up necessary to formulate the optimization problem requires us to obtain $K_P(O_i, O_j)$ (Eq. 20) for all $i,j \in [n]$, and occurs the cost $\mathcal{O}(n^2)$:
\begin{itemize}
    \item Computing $\exp(-\|O_i-O_j\|^2)$ for all $i,j \in [n]$: $\mathcal{O}(n^2)$
    \item Computing $f_P(O_i)$ for all $i \in [n]$: $\mathcal{O}(4n^2)$
    \item Computing $c_P$: $\mathcal{O}(4n)$
\end{itemize}
The optimization problem for KDPE involves $n$ regressors, and therefore the computational complexity of this method is $\mathcal{O}(n^2 + k(n))$ per iteration, where $k(n)$ is the complexity for the solver.

In contrast, TMLE (as introduced in the paper) occurs an $O(n)$ cost when calculating the influence function in each iteration, and involves only 1 regressor, resulting in a complexity of $\mathcal{O}(n + k(1))$ in each iteration. 

The difference in cost per iteration (quadratic, as opposed to linear in $n$), can be seen as the price to pay for a plug-in distribution that works with all parameters satisfying our regularity assumptions.  Our empirical results corroborate these scaling results. The median time-per-iteration in DGP1 for 150 samples and 300 samples is 0.22 and 0.91 respectively, indicating a roughly $\mathcal{O}(n^2)$ scaling for our method. For DGP1, we average between 5-6 iterations before termination, using the hyperparameters in Section \ref{sec:sims}. For DGP2, we average 2 iterations before termination. Across our simulations for DGP1, the average runtime for our method is 5.8 seconds, with 5.8 iterations on average. TMLE, as implemented in \cite{one_step_tmle}, uses the one-step method,  which performs the MLE optimization in one iteration using logistic regression. Because this version of TMLE involves no iterations (same time complexity as logistic regression), it is a poor comparison with KDPE in terms of time complexity.  The results of TMLE are included as a standard of comparison for the distribution of our estimator, rather than its time complexity.

\subsection{Bootstrap Algorithm}\label{app:bootstrap}
We use the classical bootstrap procedure \citep{boostrap_ref}, i.e.,  sampling $n$ observations from our observed data $\{o_i\}_{i=1}^n$ with replacement,
to estimate
the variance of the estimator. 
We formally state this procedure in
Algorithm \ref{alg:bootstrap}.

\begin{algorithm}
    \caption{Bootstrap Confidence Interval for KDPE}
    \label{alg:bootstrap}
    \begin{algorithmic}[1]
    \STATE \textbf{Input}: $\{O_i\}_{i=1}^n$, convergence tolerance function $l(\cdot)$, convergence tolerance $\gamma$, density bound $c$, and bootstrap iterations $m$. 
    \STATE Obtain distribution $\h{P}_{\KDPE} = \KDPE(\{O_i\}_{i=1}^n,l(\cdot), \gamma, c)$. 
    \STATE Set the estimate $\h{\psi}=\psi(\h{P}_{\KDPE})$.
    \FOR{$j \in \{1,...,m\}$}
        \STATE Sample with replacement $\{O_i^*\}_{i=1}^n$ from sample $\{O_i\}_{i=1}^n$, and remove duplicates.
        \STATE Set the estimate $\h{\psi}_j=\psi(\text{KDPE}(\{O_i^*\}_{i=1}^n,l(\cdot), \gamma, c))$
    \ENDFOR
    \STATE Calculate the mean estimate $\bar{\psi} = \sum_{j=1}^n \h\psi_j/n.$
    \STATE Estimate the variance $\h{S} = \sum_{j=1}^m(\h{\psi}_j-\bar{\psi})^2/(n-1)$. 
    \STATE Return $\left[\h\psi \pm \Phi^{-1}(1-\alpha)\sqrt{\h{S}}\;\right]$.
    \end{algorithmic}    
\end{algorithm}

\subsection{Bootstrap Intervals for Inference}\label{subsec:bootstrap} 
 %
 %
\begin{table}[h]
\centering
\begin{tabular}{rlrrrr}
\toprule
    & Parameter & $\h{\textsf{Var}}(\h\psi_{\text{target}})$ & Avg. Length (95\% CI) & Coverage (95\% CI) \\ 
  \midrule
   KDPE & $\psi_\ATE$  & 0.00427 (0.00287) & 0.252 & 0.945\\ 
        & $\psi_\RRm$ & 0.04562 (0.02555)& 0.826 &   0.975\\ 
        & $\psi_\ORr$ & 0.30303 (0.15322) & 2.107&  0.970\\
 \midrule
 TMLE & $\psi_{\ATE}$ & 0.00192 (0.00296) &0.172  &  0.903\\ 
      & $\psi_\RRm$  & 0.01483 (0.02837)  & 0.510  & 0.903\\
      & $\psi_\ORr$  & 0.07016 (0.14981)  &1.230  & 0.907\\
 \bottomrule
\end{tabular}
\caption{Estimated Variance and Confidence Intervals of $\h\psi_{\text{ATE}}$, $\h\psi_{\text{RR}}$, $\h\psi_{\text{OR}}$ of KDPE and TMLE for DGP1 across 237 simulations, with 100 bootstrap iterations each for KDPE.}
\label{table:bootstrap}
\end{table}

Table \ref{table:bootstrap} reports average estimated variance, average length of 95\% intervals, and the coverage across 237 simulations of $n=300$ for DGP1. For comparison, we include the results of inference for TMLE, which estimates the variance $\h{\textsf{Var}}(\h\psi) = \PP_n[(\phi_{\h P_{\TMLE}}^{\text{target}})^2]$ using knowledge of the (efficient) IF. The baselines for both TMLE and KDPE (in parentheses) are the sample variances of the simulated distributions in Section \ref{subsec:emp_dist}, and provided next to average estimated variance. The KDPE-bootstrap variance estimates overestimate the variance for all target parameters relative to the KDPE baseline; the inflated variance estimates lead to conservative confidence intervals for the bootstrapped KDPE estimates with over-coverage, as shown in the coverage results of Table \ref{table:bootstrap}. By contrast,  TMLE's IF-based variance estimate underestimates the variance for all target parameters relative to the TMLE baseline, leading to undercoverage. These preliminary results suggest that bootstrap interval calibration is future direction for improvement, and are intended as a starting point for future investigation. While these results imply that our bootstrap-estimated confidence intervals are conservative, more extensive testing is required to empirically validate these results. A large theoretical gap also remains for confirming that our bootstrap-estimated variances are consistent asymptotically, and providing an analysis for Algorithm \ref{alg:bootstrap} under KDPE is another natural step towards our goal of fully computerized inference. Lastly, as shown in Table \ref{table:bootstrap}
and Algorithm \ref{alg:bootstrap}, we only use the bootstrap samples to estimate the variance of our estimator and use a normal approximation to construct confidence intervals. Alternative methods for constructing confidence intervals include directly taking the quantiles of the bootstrap estimates and/or subsampling while bootstrapping, which we do not explore in this project.

\end{document}